\theoremstyle{plain}
\newtheorem{theorem}{Theorem}[section]
\newtheorem{lemma}[theorem]{Lemma}
\newtheorem{corollary}[theorem]{Corollary}
\theoremstyle{definition}
\newtheorem{remark}[theorem]{Remark}
\newcommand*{\cI}{\mathcal{I}}
\newcommand*{\cN}{\mathcal{N}}
\newcommand*{\cM}{\mathcal{M}}
\newcommand*{\cP}{\mathcal{P}}
\newcommand*{\cR}{\mathcal{R}}
\newcommand*{\cS}{\mathrm{S}}
\newcommand*{\cU}{\mathcal{U}}
\newcommand*{\cX}{\mathcal{X}}
\newcommand*{\BB}{\mathrm{L}}
\newcommand*{\TC}{\mathrm{TC}}
\newcommand*{\ee}{\mathrm{e}}
\newcommand*{\N}{\mathbb{N}}
\newcommand*{\R}{\mathbb{R}}
\newcommand*{\eps}{\varepsilon}
\newcommand*{\D}{\mathrm{S}}
\newcommand*{\Q}{\mathrm{Q}}
\newcommand*{\Pos}{\mathrm{P}}
\newcommand*{\TPCP}{\mathrm{TPCP}}
\newcommand*{\supp}{\mathrm{supp}}
\newcommand*{\id}{\mathrm{id}}
\newcommand*{\tr}{\mathrm{tr}}
\newcommand*{\ket}[1]{| #1 \rangle}
\newcommand*{\bra}[1]{\langle #1 |}
\newcommand{\proj}[1]{|#1\rangle\!\langle #1|}
\newcommand{\norm}[1]{\left\lVert#1\right\rVert}
\newcommand*{\MD}{D_{\mathbb{M}}}
\newcommand*{\ci}{\mathrm{i}}  
\newcommand*{\di}{\mathrm{d}}  
\newcommand*{\dd}{|\hspace{-0.4mm}|}
\begin{document}

\title{Universal recovery maps and approximate sufficiency of quantum relative entropy}

\author[1]{Marius Junge}
\author[2]{Renato Renner}
\author[2]{David Sutter}
\author[3]{Mark M.~Wilde}
\author[4]{Andreas Winter}

 \affil[1]{Department of Mathematics, University of Illinois at Urbana-Champaign, Illinois 61801-2975, USA}
\affil[2]{Institute for Theoretical Physics, ETH Zurich, Switzerland}
\affil[3]{Hearne Institute for Theoretical Physics, Department of Physics and Astronomy, Center for Computation and Technology, Louisiana State University,  Baton Rouge, Louisiana 70803, USA}
\affil[4]{ICREA \&{} F\'{\i}sica Te\`{o}rica: Informaci\'{o} i Fen\`{o}mens
 Qu\`{a}ntics, Universitat Aut\`{o}noma de Barcelona, ES-08193
 Bellaterra (Barcelona), Spain.
}

\date{}

\maketitle

\begin{abstract}
The data processing inequality states that the quantum relative entropy between two states $\rho$ and $\sigma$ can never increase by applying the same quantum channel
$\mathcal{N}$ to both states. This inequality can be strengthened with a remainder term in the form of a distance between $\rho$ and the closest recovered state $(\mathcal{R} \circ \mathcal{N})(\rho)$, where $\mathcal{R}$ is a recovery map with the property that $\sigma = (\mathcal{R} \circ \mathcal{N})(\sigma)$. We show the existence of an \emph{explicit} recovery map that is \emph{universal} in the sense that it  depends only on $\sigma$ and the quantum channel $\mathcal{N}$ to be reversed. This result gives an alternate, information-theoretic characterization of the conditions for approximate quantum error correction.
\end{abstract}

\section{Introduction}
For two Hilbert spaces $A$ and $B$, let $\D(A)$ denote the set of density operators on $A$ and let $\TPCP(A,B)$ be the set of trace-preserving completely positive maps from $A$ to $B$.\footnote{A linear mapping $\cN_{A \to B}$ from $A$ to $B$ is said to be \emph{completely positive} if $(\cN_{A \to B} \otimes \cI_R)(\rho_{AR}) \geq 0$ for all $\rho_{AR} \geq 0$, where $R$ denotes an arbitrary reference system and $\cI_R$ denotes the identity map on $R$.  The mapping is additionally trace preserving if any positive semi-definite trace-class input operator is mapped to an output operator that has the same trace.} Let $\Q(A)$ denote some subset of $\D(A)$. A quantum channel $\cN \in \TPCP(A,B)$ is called \emph{sufficient} (or \emph{reversible}) \emph{with respect to} $\Q(A)$, if there exists a recovery map $\cR \in \TPCP(B,A)$ such that 
\begin{align}
(\cR \circ \cN)(\rho) = \rho \quad \textnormal{for all} \quad \rho \in \Q(A) \ .
\end{align}
Sufficiency of quantum channels has been studied extensively (see~\cite{Pet86,Petz88,mosonyi04,anna06} and references therein).

The quantum relative entropy between two states $\rho$ and $\sigma$ is defined as \cite{lindblad73}
\begin{align} \label{eq_defRelEnt}
D(\rho \| \sigma):= \sum_i \langle \phi_i \vert \rho (\log \rho - \log \sigma)\vert \phi_i \rangle =
\sum_{i,j} |\langle \phi_i \vert \psi_j\rangle|^2 [p(i) \log p(i) - p(i) \log q(j)]
\, ,
\end{align}
where $\rho = \sum_i p(i) \vert \phi_i \rangle \langle \phi_i \vert$
and 
$\sigma = \sum_j q(j) \vert \psi_j \rangle \langle \psi_j \vert$
are spectral decompositions of $\rho$
and $\sigma$
with $\{ \vert \phi_i \rangle \}_i$ and $\{ \vert \psi_j \rangle \}_j$ orthonormal bases. Note that if the support of $\rho$ is not contained in the support of $\sigma$, then $D(\rho \| \sigma) = +\infty$.
The \emph{data processing inequality} (also known as \emph{monotonicity of the relative entropy}) states that $D(\rho \| \sigma)$
is non-increasing under trace-preserving completely positive maps~\cite{lindblad75,uhlmann77}, i.e., 
\begin{align}
D(\rho \| \sigma) \geq D(\cN(\rho)\| \cN(\sigma)) \, , \label{eq_DPI}
\end{align}
where $\cN$ is a quantum channel. The data processing inequality is related to the sufficiency of $\cN$. 
As shown in~\cite{Pet86,Petz88,anna06,anna206}, a quantum channel $\cN \in \TPCP(A,B)$ is sufficient with respect to $\Q(A)$ if and only if $D(\rho \| \sigma) = D(\cN(\rho) \| \cN(\sigma))$ for all $\rho \in \Q(A)$ and $\sigma \in \Q(A)$. It is known that this is the case if and only if there exists a recovery map $\cR \in \TPCP(B,A)$ that simultaneously reverses the action of the physical evolution $\cN$ on both states~\cite{Pet86,Petz88,Pet03}, i.e., $(\cR \circ \cN)(\rho)=\rho$ and $(\cR \circ \cN)(\sigma)=\sigma$.

Let $\BB(B)$ denote the set of bounded operators on $B$, $\TC(A)$ the set of trace-class operators on $A$, and $\cN^{\dagger}$ the adjoint map of $\cN$.\footnote{Note that the adjoint $\cN^\dagger$ of $\cN$ is defined as the unique linear map satisfying $\left \langle a,\cN^\dagger(b) \right \rangle = \left \langle \cN(a),b \right \rangle$ for all $a \in \TC(A)$ and $b\in \BB(B)$, where $\left \langle a_1,a_2 \right \rangle:=\tr(a_1^\dagger a_2)$ is the Frobenius inner product.} 
We consider $\left \langle a_1, a_2 \right \rangle_{\omega}:=\tr(a_1^{\dagger} \omega^{\frac{1}{2}} a_2 \omega^{\frac{1}{2}})$, which is an inner product on the space of operators $\{a \in \BB(A)\,:\, \Pi_{\omega} a = a \Pi_{\omega} = a  \}$, where $\Pi_{\omega}$ is the projection onto the support of $\omega$.
If the Hilbert spaces $A$ and $B$ are assumed to be separable and $\sigma \in \TC(A)$ is positive semi-definite, then the
recovery map can be taken as the \emph{Petz recovery map} 
$\cP_{\sigma,\cN}$
(also known as the \emph{transpose map}), 
 defined as the adjoint of the solution to
\begin{align} \label{eq_EqPetzIntro}
 \langle a, \cN^{\dagger}(b)  \rangle_{\sigma} = \langle \cP^{\dagger}_{\sigma,\cN}(a),b  \rangle_{\cN(\sigma)} Ê\quad \textnormal{for all} \quad a \in \BB(A), b \in \BB(B) \ .
\end{align}
The Petz recovery map $\cP_{\sigma,\cN}$ is completely positive, trace non-increasing and unique on the support of $\cN(\sigma)$~\cite{Pet86,Petz88,Pet03,OP93}.
 In the case that the Hilbert spaces $A$ and $B$ are finite-dimensional, then, on the support of $\cN(\sigma)$, the Petz recovery map takes the form  
\begin{align} \label{eq_Petzmap}
\cP_{\sigma,\cN} \ : \  X_B \mapsto \sigma^{\frac{1}{2}} \cN^{\dagger}\bigl( \cN(\sigma)^{-\frac{1}{2}} X_B\, \cN(\sigma)^{-\frac{1}{2}} \bigr) \sigma^{\frac{1}{2}} \ .
\end{align}
(Following the standard convention, $\sigma^{-1}$ is defined to be the inverse of $\sigma$ on its support.)

The concept of sufficient statistics can be made robust.
For $\eps \in [0,1]$, a quantum channel $\cN \in \TPCP(A,B)$ is \emph{$\eps$-sufficient} with respect to $\Q(A)$ if there exists a recovery map $\cR_{\eps} \in \TPCP(B,A)$ such that~\cite{anna14} 
\begin{align}
\frac{1}{2}\norm{\rho - (\cR_\eps \circ \cN)(\rho)}_1 \leq \eps \quad \textnormal{for all} \quad \rho \in \Q(A) \, .
\end{align}
Together with the case $\eps=0$ discussed above and ideas from~\cite{WL12}, this motivates the question if there exists a stronger version of the data processing inequality. More precisely, one asks for the possibility of adding a non-negative term to the right-hand side of~\eqref{eq_DPI} that indicates how well $\rho$ can be recovered from $\cN(\rho)$. Such a relation would serve as an alternative characterization of approximate sufficient statistics.

An inequality that is closely related to the monotonicity of the relative entropy is the \emph{strong subadditivity} of quantum entropy~\cite{LieRus73_1,LieRus73}, which ensures that for any tripartite state $\rho_{ABC} \in \D(A\otimes B \otimes C)$ the conditional mutual information is non-negative, i.e., $I(A:C|B)_{\rho}:=H(AB)_{\rho}+H(BC)_{\rho}-H(ABC)_{\rho}-H(B)_{\rho} \geq 0$, where $H(A)_{\rho}:=-\tr(\rho_A \log \rho_A)$ denotes the von Neumann entropy.\footnote{The definition of conditional mutual information we have given is for finite-dimensional spaces and can be extended to the infinite-dimensional case \cite{S15}.} This inequality has been strengthened recently with a remainder term in the form of a distance to the closest recovered state. It was shown in~\cite{FR14}, that for any density operator $\rho_{ABC}$,  there exists $\cR_{B \to BC} \in \TPCP(B,B\otimes C)$ (the \emph{recovery map}) such that
\begin{align} \label{eq_FR}
I(A:C|B)_{\rho} \geq - 2 \log F(\rho_{ABC},\cR_{B \to BC}(\rho_{AB})) \ ,
\end{align}
where the \emph{fidelity} of $\rho$ and $\sigma$ is defined by~\cite{Uhl76} 
\begin{align} \label{eq_defFidelity}
F(\rho,\sigma):=\norm{\sqrt{\rho}\sqrt{\sigma}}_1 \, .
\end{align}
If $A$, $B$, and $C$ are finite-dimensional Hilbert spaces, on the support of $\rho_B$, $\cR_{B \to BC}$ can be taken as a rotated Petz recovery map, i.e., a trace-preserving completely positive map of the form
\begin{align} \label{eq_rotPETZ}
X_B \mapsto V_{BC} \rho_{BC}^{\frac{1}{2}} (\rho_B^{-\frac{1}{2}} U_B X_B U_B^{\dagger}\rho_B^{-\frac{1}{2}} \otimes \id_C) \rho_{BC}^{\frac{1}{2}} V_{BC}^{\dagger} \ ,
\end{align}
where $V_{BC}$ and $U_B$ are unitaries on $B\otimes C$ and $B$, respectively.

The result of~\cite{FR14}, whose proof is based on de Finetti type arguments and properties of R\'enyi entropies, has been extended and generalized in various ways. In~\cite{BHOS14}, based on the quantum state redistribution protocol \cite{DY08} and de Finetti type arguments, it was shown that the fidelity term can be replaced by a \emph{measured relative entropy} $\MD$, which is never smaller than the fidelity term, i.e.,
\begin{align} \label{eq_fernando}
I(A:C|B)_{\rho} \geq \MD(\rho_{ABC} \| \cR_{B \to BC}(\rho_{AB})) \geq- 2 \log F(\rho_{ABC},\cR_{B \to BC}(\rho_{AB})) \ .
\end{align}
 The measured relative entropy is defined as the supremum of the relative entropy with measured inputs over all positive operator-valued measures (POVMs) $\cM = \{M_x\}$, i.e., 
\begin{align} \label{eq_defMeasRelEnt}
\MD(\rho \| \sigma) := \sup \Bigl\{ D\bigl( \cM(\rho) \big\| \cM(\sigma) \bigr) : \cM(\rho) = \sum_{x} \tr(\rho M_x) \proj{x} \text{ with } \sum_{x} M_x = \id \Bigr\} \ ,
\end{align}
where $\{\ket{x}\}_x$ is a finite set of orthonormal vectors.
We note that the tighter bound from \cite{BHOS14} came at the cost of losing all information about the structure of the recovery map. In~\cite{SFR15}, it was shown that there exists a recovery map both satisfying~\eqref{eq_fernando} and possessing a universality property, in the sense that it only depends on the marginal $\rho_{BC}$. Furthermore, for a linearized version of~\eqref{eq_FR} it was shown that the recovery map has the form of a rotated Petz recovery map with commuting unitaries, i.e., a recovery map of the form in \eqref{eq_rotPETZ} where $V_{BC}$ and $U_B$ commute with $\rho_{BC}$ and $\rho_B$, respectively. 

In view of approximate sufficiency of quantum channels discussed above, it would be helpful to have a generalization of~\eqref{eq_FR} in terms of relative entropies. This has been established in~\cite{Wilde15} with a proof technique based on the notion of a R\'enyi generalization of a relative entropy difference \cite{SBW14} and Hadamard's three-line theorem. It was shown that for any two states $\rho$ and $\sigma$ on finite-dimensional Hilbert spaces with $\supp(\rho) \subseteq \supp(\sigma)$ and any channel $\cN$ there exists a recovery map $\cR$ such that $(\cR \circ \cN)(\sigma)=\sigma$ and
\begin{align} \label{eq_wilde}
D(\rho \| \sigma) - D\bigl(\cN(\rho) \| \cN(\sigma) \bigr) \geq - 2 \log F\bigl(\rho,(\cR\circ\cN)(\rho)\bigr) \ .
\end{align}
Furthermore, the recovery map was shown to be a rotated Petz recovery map with unitaries $U$ and $V$ in the algebra generated by
 $\sigma$ and $\cN(\sigma)$, respectively.
Very recently, another different proof technique was found~\cite{STH15}, based on the concavity and monotonicity of the operator logarithm, which shows that there exists a recovery map $\cR$ such that
\begin{align} \label{eq_STH}
D(\rho \| \sigma) - D\bigl(\cN(\rho) \| \cN(\sigma) \bigr)
& \geq \MD\bigl(\rho \| (\cR \circ \cN)(\rho)\bigr) \\
& \geq - 2 \log F\bigl(\rho,(\cR\circ\cN)(\rho)\bigr) \
\label{eq_STH_1}.
\end{align}
The recovery map was shown to be a convex combination of rotated Petz recovery maps with unitaries $U$ and $V$ in the algebra generated by $\sigma$ and $\cN(\sigma)$, respectively, and therefore satisfies $(\cR \circ \cN)(\sigma)=\sigma$.

Neither in~\cite{Wilde15} nor in~\cite{STH15} could the recovery map satisfying~\eqref{eq_wilde} and~\eqref{eq_STH}, respectively, be shown to be universal, in the sense that it could be taken independent of $\rho$. We note that by the Fuchs-van de Graaf inequality~\cite{fuchs99} the fidelity can be transferred into a trace distance term such that~\eqref{eq_wilde} and~\eqref{eq_STH_1} provide alternative characterizations for approximate sufficient statistics.

\paragraph{Result.} We show that for any non-negative operator $\sigma$ and for any channel $\cN$ there exists an \emph{explicit} and \emph{universal} recovery map $\cR_{\sigma,\cN}$  such that
\begin{align} \label{eq_res_intro}
D(\rho \| \sigma)  \geq D\bigl(\cN(\rho) \| \cN(\sigma) \bigr) - 2 \log F\bigl(\rho,(\cR_{\sigma,\cN}\circ\cN)(\rho)\bigr)
\end{align}
for all density operators $\rho$ such that $\supp(\rho) \subseteq \supp(\sigma)$. A consequence of the  universality of the recovery map $\cR_{\sigma,\cN}$ is that $(\cR_{\sigma,\cN}\circ\cN)(\sigma)=\sigma$. We note that $\rho$ and $\sigma$ are defined on separable Hilbert spaces.
 We refer to Theorem~\ref{thm_explicit} and Remark~\ref{rmk_mainSimple} for a more precise statement, and we note here that the result stated in Theorem~\ref{thm_explicit} is strictly stronger than~\eqref{eq_res_intro}.

\paragraph{History of the problem.} In 1973, the \emph{strong subadditivity} of quantum entropy \cite{LieRus73_1,LieRus73} was proven. It ensures that the conditional mutual information of any tripartite state  is non-negative. Two years later the \emph{data processing inequality}, or \emph{monotonicity of the quantum relative entropy under trace-preserving completely positive maps} was proven~\cite{lindblad75,uhlmann77}. This entropy inequality states that the quantum relative entropy cannot increase after applying a quantum channel to its arguments. Since then it has been realized that this fundamental theorem has numerous applications in quantum physics, and as a consequence, it was natural to ask if it would be possible to strengthen the result. This however turned out to be challenging. More recently, several conjectures regarding an improved data processing inequality have been put forward. (See~\cite{WL12} for one of these conjectures). In this paper, we prove~\eqref{eq_res_intro} with a recovery map that satisfies all the properties that have been conjectured to hold (see e.g.,~\cite{WL12}). 

\section{Main results}
Let $\Pos(A)$ denote the set of non-negative trace-class operators on a Hilbert space $A$.
For any $\sigma \in \Pos(A)$, we define 
\begin{align}
\D_{\sigma}(A):=\{\rho \in \D(A) : \supp(\rho) \subseteq \supp(\sigma)\} \ .
\end{align}
\begin{theorem}\label{thm_explicit}
Let $A$ and $B$ be separable Hilbert spaces. For any $\sigma \in \Pos(A)$, any $\rho \in \D_{\sigma}(A)$ and any $\cN \in \TPCP(A,B)$ we have
\begin{align} \label{eq_explcitMainRes}
D(\rho \dd \sigma) \geq D\bigl(\cN(\rho) \dd \cN(\sigma) \bigr) - 2 \int_{\R} \di t \, \beta_0(t) \,  \log F\bigl(\rho,( \cR^{\frac{t}{2}}_{\sigma,\cN} \circ\cN)(\rho)\bigr) \ ,
\end{align}
where the relative entropy and fidelity are defined in~\eqref{eq_defRelEnt} and~\eqref{eq_defFidelity}, respectively. The recovery map is given by 
\begin{align} \label{eq_PetzT}
\cR^t_{\sigma,\cN}  : X_B \mapsto \sigma^{-\ci t} \cP_{\sigma, \cN}\bigl( \cN(\sigma)^{\ci t} X_B\, \cN(\sigma)^{-\ci t} \bigr) \sigma^{\ci t}
\end{align}
and $\beta_0$ a probability density function on $\R$ defined by 
\begin{align} \label{eq_defBeta}
\beta_0(t):=\frac{\pi}{2}\bigl(\cosh(\pi t) + 1 \bigr)^{-1} \ .
\end{align}
The map $\cP_{\sigma, \cN}$ is the Petz recovery map, defined as the adjoint of the unique linear map
$\cP_{\sigma, \cN}^{\dag}$
satisfying~\eqref{eq_EqPetzIntro} with domain
$\BB(\supp(\cN(\sigma)))$ and range $\BB(\supp(\sigma))$.
If $A$ and $B$ are finite-dimensional, this unique linear map $\cP_{\sigma, \cN}$ is given by~\eqref{eq_Petzmap}. 
\end{theorem}

\begin{remark} \label{rmk_mainSimple}
Using the concavity of the logarithm and the fidelity~\cite[Exercise~9.20]{nielsenChuang_book}, Theorem~\ref{thm_explicit} can be simplified to
\begin{align} \label{eq_explcitMainResSimple}
D(\rho \dd \sigma) \geq D\bigl(\cN(\rho) \dd \cN(\sigma) \bigr) - 2  \log F\bigl(\rho,( \cR_{\sigma,\cN} \circ\cN)(\rho)\bigr) \ ,
\end{align}
where 
\begin{align}\label{eq_integral}
 \cR_{\sigma,\cN}(\cdot):= \int_{\R} \di t  \,\beta_0(t) \,  \cR^{\frac{t}{2}}_{\sigma,\cN}(\cdot)
\end{align}
on the support of $\cN(\sigma)$ with $\cR^t_{\sigma,\cN}$ and $\beta_0$ defined in Theorem~\ref{thm_explicit}.\footnote{The integral in~\eqref{eq_integral} can be understood as a Riemann sum with respect to weak convergence since we have a weakly continuous family of maps.}
\end{remark}

Figure~\ref{fig_beta} depicts the probability density $\beta_0$ as a function of $t\in \R$. 
We note that the recovery map $\cR_{\sigma,\cN}$ that satisfies~\eqref{eq_explcitMainResSimple} can be chosen such that it projects everything outside of the support of $\cN(\sigma)$ to zero.

\begin{figure}
\centering
\input{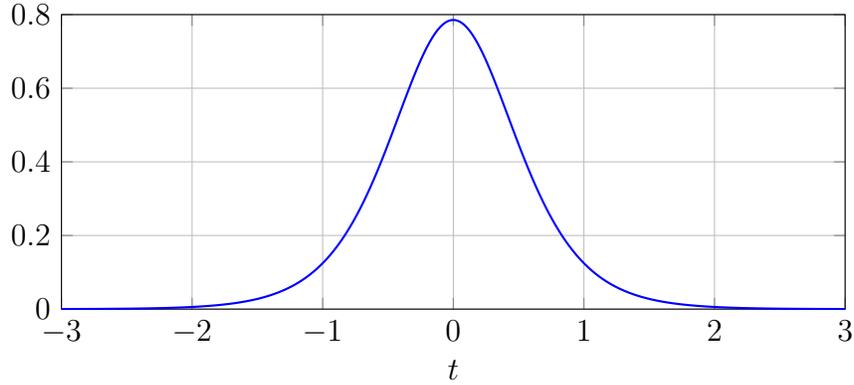}
\caption{This plot depicts the probability density $\beta_0$ defined in~\eqref{eq_defBeta} as a function of $t \inÊ\R$. We see that it is peaked around $t=0$ which corresponds to the Petz recovery map, i.e., $\mathcal{R}^{t=0}_{\sigma, \mathcal{N}} = \mathcal{P}_{\sigma, \mathcal{N}}$.}
\label{fig_beta}
\end{figure}

\begin{remark}
Inequality~\eqref{eq_explcitMainRes} together with the fact that the mapping $t\mapsto \cR_{\sigma,\cN}^t$ is continuous implies that for any $\sigma \in \Pos(A)$, $\rho \in \D_{\sigma}(A)$ and $\cN \in \TPCP(A,B)$ such that $D(\rho \| \sigma)=D(\cN(\rho) \| \cN(\sigma))$ we have $(\cR^t_{\sigma,\cN}\circ \cN)(\rho)=\rho$ and $(\cR^t_{\sigma,\cN}\circ \cN)(\sigma)=\sigma$ for  all $t \in \R$ with $\cR^t_{\sigma,\cN}(\cdot)$ defined in~\eqref{eq_PetzT}. This follows because $F(\omega,\tau) \in [0,1]$ and $F(\omega,\tau) = 1$ if and only if $\omega = \tau$ for density operators $\omega$ and $\tau$.
\end{remark}

\begin{remark}[Functoriality properties] \label{rmk_properties}
The recovery map $\cR_{\sigma,\cN}$  stated in~Remark~\ref{rmk_mainSimple} satisfies apart from~\eqref{eq_explcitMainResSimple} several desirable ``functoriality'' properties. Some of them have been stated in~\cite{WL12,LiWin14,Wilde15}. 
\begin{enumerate}
\item \textbf{Universality.} The recovery map does not depend on $\rho$. This follows directly from Remark~\ref{rmk_mainSimple}.

\item \textbf{Perfect reconstruction of $\sigma$ from $\cN(\sigma)$.} The recovery map satisfies $( \cR_{\sigma,\cN} \circ \cN)(\sigma)=\sigma$. This is clear from the fact that any rotated Petz map of the form in \eqref{eq_PetzT} perfectly recovers $\sigma$  \cite{Wilde15}, and thus so does any convex combination of these maps. Alternatively, as the recovery map predicted by Remark~\ref{rmk_mainSimple} that satisfies~\eqref{eq_explcitMainResSimple} is universal, the assertion follows by choosing $\rho = \sigma/\tr(\sigma)$.

\item \textbf{Normalization.} In case $\cN = \cI$, where $\cI$ denotes the identity map, we have $ \cR_{\sigma,\cN}(\cdot)=\Pi_{\sigma}(\cdot)\Pi_{\sigma}$, where $\Pi_{\sigma}$ denotes a projector onto the support of $\sigma$. Thus if $\sigma$ is faithful the recovery map is equal to the identity channel. 
This follows directly by~\cite[Section~4.1]{Wilde15} and by definition of the recovery map $\cR_{\sigma,\cN}(\cdot)$.

\item \textbf{Stabilization.} For any
$\sigma \in \Pos(A)$, any $\cN \in \TPCP(A,B)$, any reference system $R$, and any faithful $\tau \in \Pos(R)$, we have $ \cR_{\sigma\otimes \tau,\cN\otimes \cI_R}(\cdot) =  \cR_{\sigma,\cN} \otimes \cI_R(\cdot)$. This follows by combining~\cite[Section~4.2]{Wilde15} together with the normalization property discussed above.
\end{enumerate}
We note that by following~\cite[Sections~4.2 and 4.3]{Wilde15} it can be shown that the recovery map $\cR_{\sigma,\cN}$ fulfills some parallel and serial composition rules.
\end{remark}

The proof of Theorem~\ref{thm_explicit} consists of two parts. We first prove the statement for finite-dimensional Hilbert spaces $A$ and $B$ by employing a strengthened version of Hadamard's three-line theorem that is due to Hirschman \cite{H52}. By an approximation argument we show that the result remains valid for separable Hilbert spaces.


\section{Proof of Theorem~\ref{thm_explicit}}
\subsection*{\hypertarget{step_0_1}{Step 1:} Proof for  finite-dimensional Hilbert spaces}
In this step we assume that the Hilbert spaces $A$ and $B$ are finite-dimensional. 
Our proof of \eqref{eq_explcitMainRes} is similar to the approach taken in \cite{Wilde15}. There are two main ingredients: a R\'enyi generalization of a relative entropy difference \cite{SBW14} and Hirschman's improvement of the Hadamard three-line theorem \cite{H52}. We begin by recalling these two ingredients and then proceed to a proof of \eqref{eq_explcitMainRes}. 

For any $L \in \BB(A)$ the \emph{Schatten $p$-norm} is defined as
\begin{align}
\norm{L}_p := \bigl( \tr(|L|^p) \bigr)^{\frac{1}{p}} \quad \textnormal{for} \quad p\in [1,\infty) \ ,
\end{align}
where $|L|:=\sqrt{L^\dagger L}$. 
A R\'{e}nyi generalization of a relative entropy difference\footnote{The explanation in which sense this term is a relative entropy difference is given in~\eqref{eq:rel-ent-diff-a-1}.} is defined as
\cite{SBW14}%
\begin{equation}
\widetilde{\Delta}_{\alpha}(  \rho,\sigma,\mathcal{N})  :=
\frac{2\alpha}{\alpha-1}\log\left\Vert \left(  \left[  \mathcal{N}(
\rho)  \right]  ^{\frac{1-\alpha}{2\alpha}}\left[
\mathcal{N}(  \sigma)  \right]  ^{\frac{\alpha-1}{2\alpha}}\otimes \id_{E}\right)  U_{A\rightarrow BE}\, \sigma^{\frac{
1-\alpha}{2\alpha}}\rho^{\frac12}\right\Vert _{2\alpha}%
,\label{eq:renyi-diff}%
\end{equation}
where $\alpha\in(0,1)\cup(1,\infty)$, and $U_{A\rightarrow BE}$ is an
isometric extension of the channel $\mathcal{N}$. That is, $U_{A\rightarrow
BE}$ is a linear isometry satisfying $\operatorname{tr}_{E}(U_{A\rightarrow
BE}(  \cdot) U_{A\rightarrow BE}^{\dag})=\mathcal{N}\left(
\cdot\right)  $ and $U_{A\rightarrow BE}^{\dag}U_{A\rightarrow BE}=\id_{A}$. All
isometric extensions of a channel are related by an isometry acting on the
environment system~$E$, so that the definition in \eqref{eq:renyi-diff} is
invariant under any such choice. Recall also that the adjoint $\mathcal{N}%
^{\dag}$\ of a channel is given in terms of an isometric extension $U$ as
$\mathcal{N}^{\dag}(  \cdot)  =U^{\dag}(  (
\cdot)  \otimes \id_{E})  U$.
The following lemma was established in
\cite{SBW14}
for the case in which
$\rho$, $\sigma$, $\mathcal{N}(\rho)$,
and $\mathcal{N}(\sigma)$ are positive definite and was later extended in the appendix of \cite{Wilde15} to hold for the case in which $\rho  \in \D_{\sigma}(A)$:

\begin{lemma}
[\cite{SBW14,Wilde15}]
Let $A$ and $B$ be finite-dimensional Hilbert spaces.
The following limit holds for $\sigma \in \Pos(A)$, $\rho \in \D_{\sigma}(A)$, and  $\cN \in \TPCP(A,B)$:%
\begin{equation}
\lim_{\alpha\rightarrow1}\widetilde{\Delta}_{\alpha}(  \rho
,\sigma,\mathcal{N})  =D(\rho\Vert\sigma)-D\bigl(  \mathcal{N}(
\rho) \big\|  \mathcal{N}(  \sigma)  \bigr)
.\label{eq:rel-ent-diff-a-1}%
\end{equation}
\end{lemma}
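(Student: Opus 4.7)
The plan is to recognize that $\widetilde{\Delta}_\alpha$ takes a $0/0$ indeterminate form at $\alpha=1$ and to evaluate the limit by L'H\^opital's rule, equivalently by a first-order Taylor expansion. Using $\|L\|_p^p = \tr[(L^\dagger L)^{p/2}]$, I rewrite
\begin{equation}
\widetilde{\Delta}_\alpha(\rho,\sigma,\cN) = \frac{1}{\alpha-1}\log f(\alpha),\qquad f(\alpha):=\tr\bigl[(X(\alpha)^\dagger X(\alpha))^\alpha\bigr],
\end{equation}
where $X(\alpha)$ denotes the operator inside the $2\alpha$-norm in~\eqref{eq:renyi-diff}. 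At $\alpha=1$ the twist factor collapses to the identity, so $X(1)=U\rho^{1/2}$ and $Y(1):=X(1)^\dagger X(1)=\rho^{1/2}U^\dagger U\rho^{1/2}=\rho$; hence $f(1)=\tr[\rho]=1$, the logarithm vanishes, and L'H\^opital reduces the claim to computing $f'(1)=D(\rho\|\sigma)-D(\cN(\rho)\|\cN(\sigma))$.

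To compute $f'(1)$ I write $f(\alpha)=\tr[\exp(\alpha\log Y(\alpha))]$ and apply the standard identity $\frac{d}{d\alpha}\tr[\exp(g(\alpha))] = \tr[\exp(g(\alpha))\,g'(\alpha)]$ with $g(\alpha)=\alpha\log Y(\alpha)$. Since $\exp(g(1))=\rho$ and $g'(1)=\log\rho+\frac{d}{d\alpha}\log Y(\alpha)|_{\alpha=1}$, this yields
\begin{equation}
f'(1) = \tr[\rho\log\rho] + \tr\bigl[\rho\cdot\tfrac{d}{d\alpha}\log Y(\alpha)|_{\alpha=1}\bigr].
\end{equation}
The second term I handle via the integral representation $\log Y = \int_0^\infty\bigl((1+s)^{-1}-(Y+s)^{-1}\bigr)ds$, differentiation under the integral, and cyclicity of the trace. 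The eigenvalue computation $\int_0^\infty (\rho+s)^{-1}\rho(\rho+s)^{-1}\,ds = \Pi_\rho$ then collapses this term to $\tr[Y'(1)]$.

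The remaining step is to differentiate $Y(\alpha)=X(\alpha)^\dagger X(\alpha)$ at $\alpha=1$. With $s(\alpha):=(1-\alpha)/(2\alpha)$ and $s'(1)=-1/2$, the product rule applied to $X(\alpha)=\bigl([\cN(\rho)]^{s}[\cN(\sigma)]^{-s}\otimes\id_E\bigr)U\sigma^{s}\rho^{1/2}$ gives
\begin{equation}
\tr[Y'(1)] = 2\,\mathrm{Re}\,\tr[X(1)^\dagger X'(1)] = -\tr\bigl[U\rho U^\dagger\bigl((\log\cN(\rho)-\log\cN(\sigma))\otimes\id_E\bigr)\bigr] - \tr[\rho\log\sigma].
\end{equation}
The Stinespring relation $\tr_E[U\rho U^\dagger]=\cN(\rho)$ rewrites the first summand as $-D(\cN(\rho)\|\cN(\sigma))$, and combining everything yields $f'(1)=D(\rho\|\sigma)-D(\cN(\rho)\|\cN(\sigma))$, which is the desired limit.

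The main technical obstacle is bookkeeping around zero eigenvalues: $\log\rho$, $\log\sigma$, and $\log\cN(\sigma)$ are ill-defined when the supports are proper. This is handled by the standing hypothesis $\supp(\rho)\subseteq\supp(\sigma)$, together with the observation that $X(\alpha)$ ends in $\rho^{1/2}$, so $Y(\alpha)$ is supported on $\supp(\rho)$ for all $\alpha$; hence $\frac{d}{d\alpha}\log Y|_{\alpha=1}$ need only be interpreted on this support, where the identity $\int_0^\infty(\rho+s)^{-1}\rho(\rho+s)^{-1}ds=\Pi_\rho$ holds and $\tr[Y'(1)\Pi_\rho]=\tr[Y'(1)]$. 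Interchanging limit and derivative is routine since $X(\alpha)$ is analytic in $\alpha$ on a complex neighborhood of $1$ when restricted to the relevant supports.
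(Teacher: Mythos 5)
The paper does not prove this lemma; it states it as a citation to~\cite{SBW14,Wilde15}, so there is no internal proof to compare against. Your argument is a correct direct derivation and is in the same spirit as the cited references: rewriting $\widetilde{\Delta}_\alpha = \frac{1}{\alpha-1}\log f(\alpha)$ with $f(\alpha)=\tr\bigl[(X(\alpha)^\dagger X(\alpha))^\alpha\bigr]$, noting $f(1)=\tr\rho=1$, and evaluating the limit by L'H\^opital reduces everything to $f'(1)$. The chain of identities you use — $\frac{d}{d\alpha}\tr[\exp g(\alpha)]=\tr[\exp(g(\alpha))\,g'(\alpha)]$, the integral representation of $\log$, $\int_0^\infty(\rho+s)^{-1}\rho(\rho+s)^{-1}ds=\Pi_\rho$, $\tr[Y'(1)]=2\,\mathrm{Re}\,\tr[X(1)^\dagger X'(1)]$, and the Stinespring collapse $\tr_E[U\rho U^\dagger]=\cN(\rho)$ — all check out, and they do assemble to $f'(1)=D(\rho\|\sigma)-D(\cN(\rho)\|\cN(\sigma))$. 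The one point I would make more explicit than you do is the justification for L'H\^opital itself: you need $f$ to be differentiable at $\alpha=1$, and the potential obstruction is eigenvalues of $Y(\alpha)$ crossing zero. Since $X(\alpha)$ ends in $\rho^{1/2}$, one has $\supp Y(\alpha)\subseteq\supp\rho$ for all $\alpha$, and by lower semicontinuity of rank together with $\mathrm{rank}\,Y(1)=\mathrm{rank}\,\rho$ the rank of $Y(\alpha)$ is constant near $\alpha=1$; consequently $\log Y(\alpha)$ restricted to $\supp\rho$ is analytic in a neighbourhood of $\alpha=1$, which is what licenses the differentiation. Your remark about the support bookkeeping is exactly the right thing to worry about; it would be worth also noting that $\supp(\cN(\rho))\subseteq\supp(\cN(\sigma))$ (a consequence of $\supp\rho\subseteq\supp\sigma$ and positivity of $\cN$) and that $U\rho^{1/2}$ has range inside $\supp(\cN(\rho))\otimes E$, since these are what let you discard the projectors $\Pi_{\cN(\rho)}$, $\Pi_{\cN(\sigma)}$ appearing from $A^{0}=\Pi_A$ when you differentiate the $s$-powers at $s=0$.
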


For $\alpha=\frac12$, observe that%
\begin{align}
\widetilde{\Delta}_{\frac12}(  \rho,\sigma,\mathcal{N})    &
=-2\log\left\Vert \left(  \left[  \mathcal{N}(  \rho)  \right]
^{\frac12}\left[  \mathcal{N}(  \sigma)  \right]  ^{-\frac12}\otimes
\id_{E}\right)  U_{A\rightarrow BE}\,\sigma^{\frac12}\rho^{\frac12}\right\Vert _{1} \notag \\
&  =-2 \log  F\bigl(  \rho,\mathcal{P}_{\sigma,\mathcal{N}}\circ\mathcal{N}(  \rho)   \bigr)  \, ,
\end{align}
where $\cP_{\sigma,\cN}$ denotes the Petz recovery map defined in~\eqref{eq_Petzmap}.

The following lemma is based on Hirschman's improvement of the Hadamard three-line theorem \cite{H52}, and for completeness, we provide a proof in Appendix~\ref{sec:op-hirschman-proof}.

\begin{lemma}
\label{thm:op-hirschman}
Let $S:=\left\{  z\in\mathbb{C}:0\leq\operatorname{Re}\left\{  z\right\}
\leq1\right\} $ and let $G:S\rightarrow
L(\mathcal{H})$ be a bounded map that is holomorphic on the interior of $S$
and continuous on the boundary. Let $\theta\in(0,1)$ and define $p_{\theta}$
by%
\begin{equation}
\frac{1}{p_{\theta}}=\frac{1-\theta}{p_{0}}+\frac{\theta}{p_{1}}\ ,
\end{equation}
where $p_{0},p_{1}\in\left[  1,\infty\right]  $. Then the following bound
holds%
\begin{equation}
\log\bigl(  \left\Vert G(\theta)\right\Vert _{p_{\theta}}\bigr)  \leq
\int_{\R}\di t\ \Bigl(  \alpha_{\theta}(t)\log \bigl(  \left\Vert
G(\ci t)\right\Vert _{p_{0}}^{1-\theta}\bigr)  +\beta_{\theta}(t)\log\bigl(
\left\Vert G(1+\ci t)\right\Vert _{p_{1}}^{\theta}\bigr)  \Bigr) \ ,\label{eq:oper-hirschman}%
\end{equation}
where $\alpha_{\theta}(t)$ and $\beta_{\theta}(t)$ are defined by
\begin{align}
\alpha_{\theta}(t)   :=\frac{\sin(\pi\theta)}{2(1-\theta)\bigl(
\cosh(\pi t)-\cos(\pi\theta)\bigr)  } \qquad \textnormal{and} \qquad
\beta_{\theta}(t)  :=\frac{\sin(\pi\theta)}{2\theta\bigl( \cosh(\pi
t)+\cos(\pi\theta)\bigr)  }\ .\label{eq:pt}%
\end{align}
\end{lemma}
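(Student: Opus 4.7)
The plan is to reduce the operator inequality to the scalar version of Hirschman's strengthening of the three-line theorem, by a standard Stein-type analytic interpolation combined with duality for the Schatten norms. First I would recall the scalar statement: if $f\colon S\to\mathbb{C}$ is bounded, holomorphic on the interior of $S$, and continuous on $\partial S$, then
\begin{equation*}
\log|f(\theta)|\;\leq\;\int_{\R}\di t\,\bigl[\alpha_\theta(t)\log|f(\ci t)|^{1-\theta}+\beta_\theta(t)\log|f(1+\ci t)|^{\theta}\bigr],
\end{equation*}
with $\alpha_\theta,\beta_\theta$ as in \eqref{eq:pt}. This follows from the Poisson integral representation for the strip, in which $(1-\theta)\alpha_\theta(t)$ and $\theta\beta_\theta(t)$ are precisely the two Poisson kernels at the point $\theta\in(0,1)$, applied to the subharmonic function $\log|f|$.

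Second, I would invoke duality for Schatten norms: letting $q_0,q_1,q_\theta$ denote the Hölder conjugates of $p_0,p_1,p_\theta$ (note $1/q_\theta=(1-\theta)/q_0+\theta/q_1$), one has
\begin{equation*}
\|G(\theta)\|_{p_\theta}\;=\;\sup\Bigl\{\,\bigl|\tr\bigl(G(\theta)\,Y\bigr)\bigr|\;:\;Y\in L(\mathcal{H}),\;\|Y\|_{q_\theta}\leq 1\Bigr\}.
\end{equation*}
Fix any admissible $Y$, write its polar decomposition $Y=V|Y|$, and define the auxiliary analytic family
\begin{equation*}
Y(z)\;:=\;V\,|Y|^{\,q_\theta\psi(z)},\qquad \psi(z):=\frac{1-z}{q_0}+\frac{z}{q_1},
\end{equation*}
so that $Y(\theta)=Y$. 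Since $\operatorname{Re}\psi(\ci t)=1/q_0$ and $\operatorname{Re}\psi(1+\ci t)=1/q_1$, a direct computation gives $\|Y(\ci t)\|_{q_0}=\|Y(1+\ci t)\|_{q_1}=1$. Setting $f(z):=\tr(G(z)Y(z))$, the hypotheses on $G$ together with the analyticity of $Y(z)$ ensure that $f$ is bounded and holomorphic on the interior of $S$ and continuous on $\partial S$, while Hölder's inequality on the boundary yields $|f(\ci t)|\leq\|G(\ci t)\|_{p_0}$ and $|f(1+\ci t)|\leq\|G(1+\ci t)\|_{p_1}$.

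The final step is to apply the scalar Hirschman theorem to $f$, substitute the boundary bounds above, and take the supremum over all admissible $Y$; since $|f(\theta)|=|\tr(G(\theta)Y)|$, the supremum on the left is exactly $\|G(\theta)\|_{p_\theta}$, proving \eqref{eq:oper-hirschman}. I expect the main obstacle to lie in the technical verification that $f$ satisfies the growth condition required for the scalar Hirschman theorem to apply, since one needs $\log|f|$ to be well controlled at the ends of the strip in order for the Poisson representation to be free of contributions at infinity; the usual remedy is to multiply $f$ by a damping factor of the form $\exp(\varepsilon(z^2-z))$, invoke the theorem, and then let $\varepsilon\downarrow 0$. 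Secondary technical points are the edge cases $p_0,p_1\in\{1,\infty\}$, in which $Y(z)$ may degenerate and an additional approximation of $|Y|$ by strictly positive operators is required, and the verification that $z\mapsto|Y|^{q_\theta\psi(z)}$ is bounded on $S$ in an appropriate Schatten norm so that $f$ is bounded there.
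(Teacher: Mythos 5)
Your proposal is correct and follows essentially the same strategy as the paper's proof: reduce to the scalar Hirschman theorem via Schatten-norm duality, using the analytic Stein interpolation family $Y(z)=V|Y|^{q_\theta\psi(z)}$ (the paper writes this via the singular value decomposition $X(z)=UD^{(1-z)/q_0+z/q_1}V$ of a norming functional, which is the same device) and Hölder's inequality on the two boundary lines. The only stylistic difference is that you take a supremum over all admissible $Y$ rather than fixing a norm-attaining operator at the outset, and you flag the damping-factor and edge-case issues more explicitly; neither changes the substance of the argument.
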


\begin{remark}
Fix $\theta\in(0,1)$. Observe that $\alpha_{\theta}(t),\beta_{\theta}(t)\geq0$
for all $t\in\mathbb{R}$ and we have
\begin{equation}
\int_{\R}\di t\ \alpha_{\theta}(t)=\int_{\R}%
\di t\ \beta_{\theta}(t)=1 \ ,
\end{equation}
(see, e.g., \cite[Exercise~1.3.8]{G08}) so that $\alpha_{\theta}(t)$ and
$\beta_{\theta}(t)$ can be interpreted as probability density functions.
Furthermore, the following limit holds
\begin{equation}
\lim_{\theta\searrow0}\beta_{\theta}(t)=\frac{\pi}{2\bigl(  \cosh(\pi
t)+1\bigr) } =\beta_{0}(t)\ ,\label{eq:dist-limit}%
\end{equation}
where $\beta_{0}$ is also a probability density function on $\R$.
\end{remark}%

We can now readily establish the desired result in \eqref{eq_explcitMainRes} for the finite-dimensional case.
In what follows, we abbreviate the isometric extension
$U_{A\rightarrow BE}$ of the channel $\mathcal{N}$
as $U$. Pick%
\begin{equation}
G(  z)  :=\left(  \left[  \mathcal{N}(  \rho)
\right]  ^{\frac{z}{2}}\left[  \mathcal{N}(  \sigma)  \right]
^{-\frac{z}{2}}\otimes \id_{E}\right)  U\, \sigma^{\frac{z}{2}}\rho^{\frac12}\ ,
\end{equation}
$p_{0}=2$, $p_{1}=1$, and $\theta\in\left(  0,1\right)  $, which fixes
$p_{\theta}=\frac{2}{1+\theta}$. The operator valued-function $G(
z)  $ satisfies the conditions needed to apply
Lemma~\ref{thm:op-hirschman}. For the choices above, we find that%
\begin{align}
\left\Vert G(  \theta)  \right\Vert _{\frac{2}{1+\theta}  }
&  =\left\Vert \left(  \left[  \mathcal{N}(  \rho)  \right]
^{\frac{\theta}{2}}\left[  \mathcal{N}(  \sigma)  \right]  ^{-\frac{\theta}{2}}\otimes \id_{E}\right)  U\, \sigma^{\frac{\theta}{2}}\rho^{\frac12}\right\Vert _{\frac{2}{
1+\theta}  }\ ,
\end{align}
and
\begin{align}
\left\Vert G(  \ci t)  \right\Vert _{2} &  =\left\Vert \left(  \left[
\mathcal{N}(  \rho)  \right]  ^{\frac{\ci t}{2}}\left[  \mathcal{N}(
\sigma)  \right]  ^{-\frac{\ci t}{2}}\otimes \id_{E}\right)  U\sigma^{\ci t}\rho
^{\frac12}\right\Vert _{2}\leq\left\Vert \rho^{\frac12}\right\Vert _{2}%
=1 \ ,\label{eq:M0} 
\end{align}
as well as
\begin{align}
\left\Vert G(  1+\ci t)  \right\Vert _{1} &  =\left\Vert \left(
\left[  \mathcal{N}(  \rho)  \right]^{\frac{1+\ci t}{2}}\left[  \mathcal{N}(  \sigma)  \right]  ^{-\frac{1+\ci t}{2}}\otimes \id_{E}\right)  U \,\sigma^{\frac{1+\ci t}{2}}\rho^{\frac12}%
\right\Vert _{1}\nonumber\\
& =\left\Vert \left(  \left[  \mathcal{N}(  \rho)  \right]
^{\frac{\ci t}{2}}\left[  \mathcal{N}(  \rho)  \right]^{\frac12}\left[
\mathcal{N}(  \sigma)  \right]  ^{-\frac{\ci t}{2}}\left[  \mathcal{N}(
\sigma)  \right]^{-\frac12}\otimes \id_{E}\right)  U\sigma^{\frac12}\sigma
^{\frac{\ci t}{2}}\rho^{\frac12}\right\Vert _{1}\nonumber\\
&  =\left\Vert \left(  \left[  \mathcal{N}(  \rho)  \right]
^{\frac12}\left[  \mathcal{N}(  \sigma)  \right]  ^{-\frac{\ci t}{2}}\left[
\mathcal{N}(  \sigma)  \right]  ^{-\frac12}\otimes \id_{E}\right)
U\sigma^{\frac12}\sigma^{\frac{\ci t}{2}}\rho^{\frac12}\right\Vert _{1}\nonumber\\
&  = F\bigl(  \rho,(\mathcal{R}_{\sigma,\mathcal{N}}^{\frac{t}{2}} \circ
\mathcal{N})(  \rho)    \bigr) \ .\label{eq:M1}%
\end{align}
Then we can apply the fact that $\left\Vert G(  \ci t)  \right\Vert
_{2}\leq1$ and \eqref{eq:oper-hirschman} to conclude that the following bound
holds for all $\theta\in(0,1)$
\begin{multline}
\log \left\Vert \left(  \left[  \mathcal{N(}\rho)\right]  ^{\frac{\theta}{2}}\left[  \mathcal{N(}\sigma)\right]^{-\frac{\theta}{2}}\otimes \id_{E}\right) U\, \sigma^{\frac{\theta}{2}}\rho^{\frac12}\right\Vert _{\frac{2}{1+\theta} }
\\
\leq \int_{\R}\di t\ \beta_{\theta}(t)\log \left( F\bigl(  \rho,(\mathcal{R}_{\sigma,\mathcal{N}}^{\frac{t}{2}} \circ
\mathcal{N})(  \rho)  \bigr)    ^{\theta} \right) \ ,
\end{multline}
which implies%
\begin{multline}
-\frac{2}{\theta}\log  \left\Vert \left(  \left[  \mathcal{N}(\rho)\right]  ^{\frac{\theta}{2}}\left[  \mathcal{N(}\sigma)\right]  ^{-\frac{\theta}{2}}\otimes \id_{E}\right)  U\sigma^{\frac{\theta}{2}}\rho^{\frac12}\right\Vert _{\frac{2}{1+\theta} }  \label{eq:critical-inequality}
\\
\geq-2\int_{\R}\di t\ \beta_{\theta}(t)\log  F\bigl(  \rho,(\mathcal{R}_{\sigma,\mathcal{N}}^{\frac{t}{2}} \circ
\mathcal{N})(  \rho)   \bigr) \  .
\end{multline}
Letting $\theta=\frac{ 1-\alpha}{\alpha}$, we see that this is the
same as%
\begin{equation}
\widetilde{\Delta}_{\alpha}(  \rho,\sigma,\mathcal{N})  \geq
- 2 \int_{\R}\di t\ \beta_{\frac{ 1-\alpha}{\alpha}}%
(t)\log F\bigl(  \rho,(\mathcal{R}_{\sigma,\mathcal{N}}^{\frac{t}{2}} \circ
\mathcal{N})(  \rho)    \bigr)
.\label{eq:alpha-bound}%
\end{equation}
Since the inequality in \eqref{eq:critical-inequality} holds for all
$\theta\in\left(  0,1\right)  $ and thus \eqref{eq:alpha-bound} holds for all
$\alpha\in\left(  \frac12,1\right)  $, we can take the limit as $\alpha\nearrow1$
and apply \eqref{eq:rel-ent-diff-a-1}, \eqref{eq:dist-limit}, and the dominated convergence theorem to conclude
that \eqref{eq_explcitMainRes} holds.

\begin{remark} \label{rmk_subnormalized}
If $A$ and $B$ are finite-dimensional Hilbert spaces the statement in Remark~\ref{rmk_mainSimple} can be slightly generalized. For any $\sigma \geq 0$ and any trace non-increasing completely positive map $\cN$ with Kraus operators $\{N_i\}$ such that $0 \neq \sum_i N_i^\dag N_i \leq \id$ the recovery map $ \cR_{\sigma,\cN}$ defined in Remark~\ref{rmk_mainSimple} satisfies \eqref{eq_explcitMainResSimple} for all subnormalized density operators $\rho \geq 0$ with $0 < \tr(\rho)\leq 1$ such that $\supp(\rho) \subseteq \supp(\sigma)$. This follows by the same argument given in Step~\hyperlink{step_1_1}{1} by using $U_{A\rightarrow BE} = \sum_i N_i \otimes \ket{i}_E$ as an extension of the trace non-increasing and completely positive map $\mathcal{N}$.
\end{remark}

\subsection*{\hypertarget{step_0_1}{Step 2:} Extension to infinite dimensions}

In this step the Hilbert spaces $A$ and $B$ are assumed to be separable (not necessarily finite-dimensional). We will show how to lift Theorem~\ref{thm_explicit} for finite-dimensional Hilbert spaces, such that it can apply to states $\rho$ and $\sigma$ and a channel $\cN$ associated with separable Hilbert spaces. This is accomplished via a limiting argument where we consider projected sequences that are finite-dimensional and therefore satisfy the desired inequality. In the limit we then obtain the statement for separable Hilbert spaces. This is a rather standard approach for generalizing statements proven for finite-dimensional Hilbert spaces to separable Hilbert spaces. 

Let $\{\Pi^a_A\}_{a \in \N}$ and $\{\Pi^b_B \}_{b \in \N}$ be sequences of finite-rank projectors on $A$ and $B$, respectively, that converge to $\id_A$ and $\id_B$, respectively, with respect to the weak operator topology, meaning that
\begin{equation}
\lim_{a\to \infty} \langle \psi \vert
\Pi^a_A \vert \phi \rangle = \langle \psi \vert
 \phi \rangle
\end{equation}
for all vectors 
$\vert \phi \rangle, \vert \psi \rangle \in A$ (similarly for $\Pi^b_B \to \id_B$). For $\sigma \in \Pos(A)$ and $\rho \in \D_{\sigma}(A)$ we consider projected versions
\begin{align} \label{eq_projRhoA}
\sigma^a :=\Pi_A^a \, \sigma \, \Pi_A^a \quad \textnormal{and} \quad \rho^a :=\Pi_A^a \, \rho \,  \Pi_A^a \ .
\end{align}
We note that the sequences $\{\rho^a \}_{a \in \N}$ and $\{\sigma^aÊ\}_{a \in \N}$ converge to $\rho$ and $\sigma$, respectively, in the trace norm (see, e.g.,~\cite{grumm} or Lemma~11.1 of \cite{holevo_book}).
Let $\cS^a$ be the set of non-negative operators that is generated by~\eqref{eq_projRhoA} for all $\rho \in \cS(A)$. For any $\cN \in \TPCP(A,B)$ we define its analogue with a projection at the input and output as
\begin{align}
\cN^{a,b}(\cdot) := \Pi_B^b \cN\bigl( \Pi_A^a ( \cdot) \Pi_A^a\bigr) \Pi_B^b \ .
\end{align}

Note that by combining Gr\"umm's theorem~\cite[Theorem~2.19]{simon_trace} with the boundedness of $\cN$ in the trace norm implies that $\cN^{a,b}$ converges to $\cN$ in the strong operator topology, i.e.,
\begin{equation}
\lim_{a,b\to \infty}
 \norm{\cN^{a,b}(\omega)-\cN(\omega)}_1 = 0
  \label{eq:channel-convergence-weak-op}
\end{equation}  
for all $\omega \in \TC(A)$.

We start by proving two lemmas that show how the difference of relative entropies and the fidelity, respectively, change when considering projected states.
\begin{lemma} \label{lem_one}
For any $\sigma \in \Pos(A)$, any $\rho \in \D_{\sigma}(A)$, and any $\cN \in \TPCP(A,B)$, we have
\begin{align}
\lim_{a \to \infty} D(\rho^a \| \sigma^a) = D(\rho \| \sigma)
\end{align}
and
\begin{align}
\liminf_{a \to \infty} \liminf_{b \to \infty}  D\bigl(\cN^{a,b}(\rho^a) \big\| \, \cN^{a,b}(\sigma^a) \bigr) \geq  D\bigl(\cN(\rho) \big\| \, \cN(\sigma) \bigr) \ .
\end{align}
\end{lemma}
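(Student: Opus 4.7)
The plan is to decouple the two limits and handle them with two standard tools: (i) lower semi-continuity of the quantum relative entropy $D(\cdot\|\cdot)$ under trace-norm convergence (valid on positive trace-class operators; see~\cite{OP93}), and (ii) the data-processing inequality applied to the pinching channel $\Phi_P(X) := PXP + (I-P)X(I-P)$, which via the direct-sum decomposition $D(X_1\oplus X_2\|Y_1\oplus Y_2) = D(X_1\|Y_1) + D(X_2\|Y_2)$ on block-diagonal states yields good control over the change in $D$ under compression by a projection. I would also invoke the trace-norm continuity of quantum channels, i.e.\ $\cN(\rho^a)\to\cN(\rho)$ whenever $\rho^a\to\rho$ in trace norm.

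First, I would establish that $\lim_a D(\rho^a\|\sigma^a) = D(\rho\|\sigma)$. Applying (ii) with $P=\Pi_A^a$ gives $D(\rho\|\sigma) \geq D(\rho^a\|\sigma^a) + D(\rho^{a,\perp}\|\sigma^{a,\perp})$ where $\rho^{a,\perp}=(\id-\Pi_A^a)\rho(\id-\Pi_A^a)$ and likewise for $\sigma^{a,\perp}$; the last term is bounded below by $\tr(\rho^{a,\perp})-\tr(\sigma^{a,\perp})$ (generalized Klein inequality), and both of these quantities vanish as $a\to\infty$, so $\limsup_a D(\rho^a\|\sigma^a) \leq D(\rho\|\sigma)$. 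The matching $\liminf$ bound follows from (i) applied to the trace-norm convergences $\rho^a\to\rho$ and $\sigma^a\to\sigma$ cited just after~\eqref{eq_projRhoA}. Next, for each fixed $a$ the identity $\cN^{a,b}(\rho^a) = \Pi_B^b\cN(\rho^a)\Pi_B^b$ (and similarly for $\sigma^a$) lets me repeat the same pinching-plus-semicontinuity argument with $P=\Pi_B^b$, yielding $\lim_b D(\cN^{a,b}(\rho^a)\|\cN^{a,b}(\sigma^a)) = D(\cN(\rho^a)\|\cN(\sigma^a))$. Finally, trace-norm continuity of $\cN$ combined with (i) gives $\liminf_a D(\cN(\rho^a)\|\cN(\sigma^a)) \geq D(\cN(\rho)\|\cN(\sigma))$.

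Combining the three facts via the elementary inequality $\limsup(A_a-B_a)\leq \lim A_a - \liminf B_a$ (valid since the first limit exists) yields
\begin{align*}
\limsup_a\limsup_b \bigl[D(\rho^a\|\sigma^a) - D(\cN^{a,b}(\rho^a)\|\cN^{a,b}(\sigma^a))\bigr]
&= \limsup_a \bigl[D(\rho^a\|\sigma^a) - D(\cN(\rho^a)\|\cN(\sigma^a))\bigr]\\
&\leq D(\rho\|\sigma) - D(\cN(\rho)\|\cN(\sigma)),
\end{align*}
which is the desired bound. I do not expect a genuine conceptual obstacle; the main thing to be careful about is that $\rho^a$, $\sigma^a$ and the channel images are merely subnormalized, so one must verify that (i), (ii), and the direct-sum formula persist for positive trace-class operators under the extended definition $D(\omega\|\tau):=\tr(\omega(\log\omega-\log\tau))$---they do, and the generalized Klein bound $D(\omega\|\tau)\geq \tr(\omega)-\tr(\tau)$ is what guarantees the correct sign for the ``perpendicular'' block. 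The degenerate case $D(\rho\|\sigma)=+\infty$ is handled separately, since the claimed inequality is then vacuous.
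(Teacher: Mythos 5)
Your argument is correct and takes essentially the same route as the paper: lower semicontinuity of $D$ for the lower bounds and monotonicity under projective compression for $\limsup_a D(\rho^a\|\sigma^a)\leq D(\rho\|\sigma)$, combined to bound the difference. The paper is slightly leaner---it cites Lindblad's Lemma~3 directly for the pointwise inequality $D(\rho^a\|\sigma^a)\leq D(\rho\|\sigma)$ (where you re-derive it via the pinching channel plus the generalized Klein inequality, which is a fine self-contained substitute), and it does not establish that the $b$-limit of the post-channel relative entropy actually exists, since the lemma only requires the $\liminf_b$ bound that follows immediately from lower semicontinuity.
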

\begin{proof}
Recall that Lindblad defined a relative entropy (which slightly differs from our definition of the relative entropy) for two positive trace class operators $\tau$ and $\omega$ as
\begin{align}
\bar D(\tau \| \omega) := \sum_{j} \bra{j} \left( \tau \log \tau - \tau \log \omega + \omega - \tau \right) \ket{j} = D(\tau \| \omega) + \tr(\omega) - \tr(\tau)\, ,
\end{align}
where $\{\ket{j}\}$ is a complete orthonormal set of eigenvectors of $\tau$ or $\omega$.
By Lemma~3 in~\cite{lindblad74} we have
\begin{align}
D(\rho^a \| \sigma^a)
&=\bar D(\rho^a \| \sigma^a) + \tr(\rho^a) - \tr(\sigma^a) \\
&\leq \bar D(\rho \| \sigma) + \tr(\rho^a) - \tr(\sigma^a)\\
&= D(\rho \| \sigma) + \tr(\rho^a - \rho) + \tr(\sigma - \sigma^a) \, .
\end{align}
Since the relative entropy is lower semicontinuous~\cite[Theorem~11.6]{holevo_book}, we obtain
\begin{equation}
D(\rho \| \sigma) \leq \liminf_{a \to \infty} D(\rho^a \| \sigma^a) \ .
\end{equation}
Combining these shows that
\begin{align} \label{eq_s1}
\lim_{a \to \infty} D(\rho^a \| \sigma^a) = D(\rho \| \sigma) \ .
\end{align}
The lower semicontinuity of the relative entropy implies that
\begin{align}
 \liminf_{a \to \infty} \liminf_{b \to \infty} D\bigl(\cN^{a,b}(\rho^a) \| \cN^{a,b}(\sigma^a) \bigr) 
& \geq \liminf_{a \to \infty} D\bigl(\cN(\rho^a) \| \cN(\sigma^a) \bigr) \\
& \geq D\bigl(\cN(\rho) \| \cN(\sigma) \bigr) \ .
\end{align}
This proves the assertion.
\end{proof}

\begin{lemma} \label{lem_PetzWellDefined}
The Petz recovery map $\cP_{\sigma^a,\cN^{a,b}}$ defined in~\eqref{eq_Petzmap} satisfies $\lim_{a\to \infty} \lim_{b \to \infty} \cP_{\sigma^a,\cN^{a,b}} = \cP_{\sigma,\cN}$, where $\cP_{\sigma,\cN}$ is a recovery map that is defined as the (unique) linear map with domain
$\operatorname{supp}(\cN(\sigma))$ and range
$\operatorname{supp}(\sigma)$
satisfying
\begin{align} \label{eq_EqPetz}
\left \langle a_2, \cN^{\dagger}(a_1) \right \rangle_{\sigma} = \left \langle \cP^{\dagger}_{\sigma,\cN}(a_2),a_1 \right \rangle_{\cN(\sigma)} Ê\quad \textnormal{for all} \quad a_1 \in \BB(B), \, a_2 \in \BB(A)\ ,
\end{align}
with the weighted inner product $\left \langle a, b \right \rangle_{\omega}:=\tr(a^{\dagger} \omega^{\frac{1}{2}} b \omega^{\frac{1}{2}})$ and $\omega$ positive semi-definite and trace class. 
\end{lemma}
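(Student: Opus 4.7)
The plan is to first establish that equation \eqref{eq_EqPetz} uniquely determines a linear map $\cP_{\sigma,\cN}^{\dagger}$ via a GNS-type Riesz representation argument, and then identify the weak-operator limit of the finite-dimensional Petz maps by showing any cluster point must satisfy \eqref{eq_EqPetz}, using uniqueness to upgrade subnet to full convergence.

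\textbf{Setting up $\cP_{\sigma,\cN}^{\dagger}$.} For trace-class $\omega \in \Pos$, the sesquilinear form $\langle x,y\rangle_{\omega}=\tr(x^{\dagger}\omega^{1/2}y\,\omega^{1/2})$ is positive semi-definite on $\BB$, so quotienting by its kernel and completing yields a Hilbert space $\cH_{\omega}$. The Kadison--Schwarz inequality $\cN^{\dagger}(a_1^{\dagger})\cN^{\dagger}(a_1) \leq \cN^{\dagger}(a_1^{\dagger}a_1)$ for the unital completely positive map $\cN^{\dagger}$ produces the key contractivity $\langle \cN^{\dagger}(a_1),\cN^{\dagger}(a_1)\rangle_{\sigma} \leq \langle a_1,a_1\rangle_{\cN(\sigma)}$. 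Hence for each fixed $a_2 \in \BB(A)$, the functional $a_1 \mapsto \langle a_2,\cN^{\dagger}(a_1)\rangle_{\sigma}$ descends to a bounded functional on $\cH_{\cN(\sigma)}$, and Riesz supplies a unique representing element $\cP_{\sigma,\cN}^{\dagger}(a_2) \in \cH_{\cN(\sigma)}$, automatically supported on $\supp(\cN(\sigma))$.

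\textbf{Convergence.} For finite $a,b$, $\cP_{\sigma^a,\cN^{a,b}}$ is rigorously defined by \eqref{eq_Petzmap} (since $\sigma^a$ and $\cN^{a,b}(\sigma^a)$ are finite-rank), and a direct computation verifies the analogue of \eqref{eq_EqPetz}. I would then pass to the iterated limit $b\to\infty$ then $a\to\infty$ in that identity. The trace-norm convergences $\sigma^a \to \sigma$ and $\cN^{a,b}(\sigma^a) \to \cN(\sigma)$ upgrade via Powers--Stormer to $(\sigma^a)^{1/2} \to \sigma^{1/2}$ and $\cN^{a,b}(\sigma^a)^{1/2} \to \cN(\sigma)^{1/2}$ in Hilbert--Schmidt norm, while \eqref{eq:channel-convergence-weak-op} yields weak-operator convergence $\cN^{a,b,\dagger}(a_1) \to \cN^{\dagger}(a_1)$ for fixed bounded $a_1$. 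Combining these gives
\begin{equation}
\lim_{a\to\infty}\lim_{b\to\infty} \langle a_2, \cN^{a,b,\dagger}(a_1)\rangle_{\sigma^a} = \langle a_2, \cN^{\dagger}(a_1)\rangle_{\sigma}.
\end{equation}
Since each $\cP_{\sigma^a,\cN^{a,b}}^{\dagger}$ is subunital (the adjoint of a trace-non-increasing CP map) and hence uniformly bounded by $1$, Banach--Alaoglu produces a weak-operator cluster point $Q$ along some subnet; by the Hilbert--Schmidt convergence of the square-roots, the right-hand side of the finite-dimensional Petz identity passes to $\langle Q(a_2), a_1\rangle_{\cN(\sigma)}$. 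Hence $Q$ satisfies $\langle Q(a_2), a_1\rangle_{\cN(\sigma)} = \langle a_2, \cN^{\dagger}(a_1)\rangle_{\sigma}$, and by uniqueness $Q = \cP_{\sigma,\cN}^{\dagger}$. Therefore the whole net converges weakly to $\cP_{\sigma,\cN}^{\dagger}$, and taking adjoints yields $\cP_{\sigma^a,\cN^{a,b}} \to \cP_{\sigma,\cN}$ in the weak-operator topology.

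\textbf{Main obstacle.} The central difficulty is the unboundedness of $\cN(\sigma)^{-1/2}$ on $\supp(\cN(\sigma))$ in infinite dimensions, which precludes any direct convergence argument from the explicit formula \eqref{eq_Petzmap}: the inverse square-root is not continuous under the weak-operator limit of $\cN^{a,b}(\sigma^a)$. The implicit Riesz/GNS formulation sidesteps this entirely by absorbing the inverse into the Hilbert-space structure, and uniqueness of the Riesz representative is precisely the mechanism that promotes subnet convergence to convergence of the full net.
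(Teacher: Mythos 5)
Your proposal is correct and rests on the same backbone as the paper's proof: both characterize $\cP_{\sigma,\cN}$ through the weighted inner-product identity \eqref{eq_EqPetz} rather than the explicit formula (precisely to avoid the unbounded $\cN(\sigma)^{-1/2}$), both establish the convergence $\langle a_2,(\cN^{a,b})^{\dagger}(a_1)\rangle_{\sigma^a}\to\langle a_2,\cN^{\dagger}(a_1)\rangle_{\sigma}$ via the same toolbox (Powers--St{\o}rmer for the square roots, H\"older, trace-norm convergence of $\sigma^a$ and $\cN^{a,b}(\sigma^a)$), and both then transfer this to the right-hand side of the Petz identity. Where you part ways with the paper is in how the argument is closed: the paper proves an explicit intermediate estimate (their \eqref{eq:petz-converge-almost-final}) showing the weight $\cN^{a,b}(\sigma^a)$ in the right-hand inner product can be replaced by $\cN(\sigma)$, and then reads off weak-operator convergence directly from the resulting chain of equalities. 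You instead invoke Arveson/Banach--Alaoglu compactness of the uniformly bounded family $\{\cP_{\sigma^a,\cN^{a,b}}^{\dagger}\}$ to extract a cluster point $Q$, identify $Q=\cP_{\sigma,\cN}^{\dagger}$ by uniqueness of the Riesz representative, and upgrade subnet convergence to full-net convergence. This is a genuine and somewhat cleaner packaging: it makes the role of uniform boundedness (via sub-unitality of the adjoint of a trace-nonincreasing CP map, which the paper uses implicitly when bounding $\|\cP^{\dagger}(a_2)\|_{\infty}$) structurally explicit, and it localizes the ``replace the weight'' step inside the cluster-point computation rather than as a free-standing estimate. Note, though, that your first step is slightly looser than what is actually required: the Riesz representative of $a_1\mapsto\langle a_2,\cN^{\dagger}(a_1)\rangle_{\sigma}$ lives a priori only in the GNS space $\cH_{\cN(\sigma)}$, not in $\BB(B)$; that $\cP_{\sigma,\cN}^{\dagger}(a_2)$ is in fact a bounded operator on $\supp(\cN(\sigma))$ is part of Petz's theory of sufficient coarse-grainings, which the paper simply cites. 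Your sketch does not close this gap, but since it concerns the known well-definedness of the Petz map rather than the convergence claim, the proof strategy is sound.
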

\begin{proof}
We begin by outlining the proof.
As shown by Petz~\cite{Pet86,Petz88,Pet03} (see also \cite[Chapter~8]{OP93}), the map $\cP_{\sigma^a,\cN^{a,b}}$ defined in~\eqref{eq_Petzmap} is the unique linear map
with domain
$\operatorname{supp}(\cN^{a,b}(\sigma^a))$ and range
$\operatorname{supp}(\sigma^a)$
 satisfying
\begin{align} \label{eq_setEq}
\left \langle a_2, (\cN^{a,b})^{\dagger}(a_1) \right \rangle_{\sigma^a} = \left \langle \cP^{\dagger}_{\sigma^a,\cN^{a,b}}(a_2),a_1 \right \rangle_{\cN^{a,b}(\sigma^a)} Ê\quad \textnormal{for all} \quad a_1 \in \BB(B), \, a_2 \in \BB(A) \ ,
\end{align}
and the map $\cP_{\sigma,\cN}$ is the unique linear map
with domain
$\operatorname{supp}(\cN(\sigma))$ and range
$\operatorname{supp}(\sigma)$
 satisfying
\begin{align} \label{eq_setEq-inf-dim}
\left \langle a_2, (\cN)^{\dagger}(a_1) \right \rangle_{\sigma} = \left \langle \cP^{\dagger}_{\sigma,\cN}(a_2),a_1 \right \rangle_{\cN(\sigma)} Ê\quad \textnormal{for all} \quad a_1 \in \BB(B), \, a_2 \in \BB(A) \ .
\end{align}
We will first show that
\begin{align} \label{eq_s1_converge}
\lim_{a \to \infty} \lim_{b \to \infty} \left \langle a_2, (\cN^{a,b})^{\dagger}(a_1) \right \rangle_{\sigma^a}  
=\left \langle a_2, \cN^{\dagger}(a_1) \right \rangle_{\sigma} ,
\end{align}
for all $a_1 \in \BB(B),  a_2 \in \BB(A)$,
which by \eqref{eq_setEq} and \eqref{eq_setEq-inf-dim} implies that
\begin{align}
\lim_{a \to \infty} \lim_{b \to \infty} \left \langle \cP^{\dagger}_{\sigma^a,\cN^{a,b}}(a_2),a_1 \right \rangle_{\cN^{a,b}(\sigma^a)} = \left \langle \cP^{\dagger}_{\sigma,\cN}(a_2),a_1 \right \rangle_{\cN(\sigma)}
\label{eq:petz-converge-almost-final-1}
\end{align}
for all $a_1 \in \BB(B),  a_2 \in \BB(A)$.
After showing that
\begin{align}
\lim_{a \to \infty} \lim_{b \to \infty} \left \langle \cP^{\dagger}_{\sigma^a,\cN^{a,b}}(a_2),a_1 \right \rangle_{\cN^{a,b}(\sigma^a)} = 
\lim_{a \to \infty} \lim_{b \to \infty} \left \langle \cP^{\dagger}_{\sigma^a,\cN^{a,b}}(a_2),a_1 \right \rangle_{\cN(\sigma)} ,\label{eq:petz-converge-almost-final}
\end{align}
we can conclude from \eqref{eq:petz-converge-almost-final-1} and \eqref{eq:petz-converge-almost-final} that
\begin{align} 
\lim_{a \to \infty} \lim_{b \to \infty} \left \langle \cP^{\dagger}_{\sigma^a,\cN^{a,b}}(a_2),a_1 \right \rangle_{\cN(\sigma)} =
\left \langle \cP^{\dagger}_{\sigma,\cN}(a_2),a_1 \right \rangle_{\cN(\sigma)}. \label{eq:petz-converge-almost-final-2}
\end{align}
From there, we argue that this implies the convergence
$\lim_{a\to \infty} \lim_{b \to \infty} \cP_{\sigma^a,\cN^{a,b}} = \cP_{\sigma,\cN}$.

Thus, we need to establish \eqref{eq_s1_converge} and
\eqref{eq:petz-converge-almost-final}, and we begin by proving
\eqref{eq_s1_converge}. 
H\"older's inequality implies that
\begin{align}
& \left \vert \langle a_2, (\cN^{a,b})^\dagger(a_1) \rangle_{\sigma^a} - \langle a_2, \cN^\dagger(a_1) \rangle_{\sigma} \right \vert\notag \\
& = \left \vert \tr \, (\cN^{a,b})^\dagger(a_1)(\sigma^a)^{\frac{1}{2}} a_2^\dagger(\sigma^a)^{\frac{1}{2}}- \tr \, \cN^\dagger(a_1) \sigma^{\frac{1}{2}} a_2^\dagger \sigma^{\frac{1}{2}}  \right \vert \\
&=\left \vert \tr \, a_1 \big( \cN^{a,b}((\sigma^a)^{\frac{1}{2}} a_2^\dagger(\sigma^a)^{\frac{1}{2}}) - \cN(\sigma^{\frac{1}{2}} a_2^\dagger \sigma^{\frac{1}{2}}) \big)  \right \vert \\
&\leq \norm{a_1}_{\infty} \norm{\cN^{a,b}((\sigma^a)^{\frac{1}{2}} a_2^\dagger(\sigma^a)^{\frac{1}{2}})-  \cN(\sigma^{\frac{1}{2}} a_2^\dagger \sigma^{\frac{1}{2}}) }_1 \\
& \leq  \norm{a_1}_{\infty} \norm{\Pi_B^b  \cN((\sigma^a)^{\frac{1}{2}} a_2^\dagger(\sigma^a)^{\frac{1}{2}}) \Pi_B^b -  \cN((\sigma^a)^{\frac{1}{2}} a_2^\dagger(\sigma^a)^{\frac{1}{2}}) }_1 \nonumber \\
&\hspace{5mm} +  \norm{a_1}_{\infty} \norm{ \cN((\sigma^a)^{\frac{1}{2}} a_2^\dagger(\sigma^a)^{\frac{1}{2}})  -  \cN(\sigma^{\frac{1}{2}} a_2^\dagger \sigma^{\frac{1}{2}}) }_1 \, ,
\end{align}
where the final step uses the triangle inequality and that $\Pi_A^a (\sigma^a)^\frac{1}{2} = (\sigma^a)^\frac{1}{2}$. Gr\"umm's theorem~\cite[Theorem~2.19]{simon_trace} implies that for any fixed $a \in \N$ we have
\begin{align}
\lim_{bÊ\to \infty}  \norm{\Pi_B^b  \cN((\sigma^a)^{\frac{1}{2}} a_2^\dagger(\sigma^a)^{\frac{1}{2}}) \Pi_B^b -  \cN((\sigma^a)^{\frac{1}{2}} a_2^\dagger(\sigma^a)^{\frac{1}{2}}) }_1 = 0 \, ,
\end{align}
and hence it remains to show that
\begin{align} \label{eq_dddaviiiid}
\lim_{a \to \infty}  \norm{ \cN((\sigma^a)^{\frac{1}{2}} a_2^\dagger(\sigma^a)^{\frac{1}{2}})  -  \cN(\sigma^{\frac{1}{2}} a_2^\dagger \sigma^{\frac{1}{2}}) }_1 =0 \, .
\end{align}
To see this we first note that 
\begin{align}
 \norm{ \cN((\sigma^a)^{\frac{1}{2}} a_2^\dagger(\sigma^a)^{\frac{1}{2}})  -  \cN(\sigma^{\frac{1}{2}} a_2^\dagger \sigma^{\frac{1}{2}}) }_1 
 \leq \norm{\cN}_{1 \to 1} \norm{(\sigma^a)^{\frac{1}{2}} a_2^\dagger(\sigma^a)^{\frac{1}{2}} -\sigma^{\frac{1}{2}} a_2^\dagger \sigma^{\frac{1}{2}}}_1
\end{align}
and by the triangle inequality
\begin{align}
&\!\!\!\! \norm{(\sigma^a)^{\frac{1}{2}} a_2^\dagger(\sigma^a)^{\frac{1}{2}} -\sigma^{\frac{1}{2}} a_2^\dagger \sigma^{\frac{1}{2}}}_1\notag \\
& \leq \norm{(\sigma^a)^{\frac{1}{2}} a_2^\dagger(\sigma^a)^{\frac{1}{2}} - \sigma^{\frac{1}{2}} a_2^\dagger(\sigma^a)^{\frac{1}{2}}}_1 +\norm{\sigma^{\frac{1}{2}} a_2^\dagger(\sigma^a)^{\frac{1}{2}} - \sigma^{\frac{1}{2}} a_2^\dagger \sigma^{\frac{1}{2}}}_1 \\
&\leq \norm{(\sigma^a)^{\frac{1}{2}}- \sigma^{\frac{1}{2}}}_2 \norm{a_2}_{\infty} \norm{(\sigma^a)^\frac{1}{2}}_2 + \norm{\sigma^\frac{1}{2}}_2  \norm{a_2}_{\infty} \norm{(\sigma^a)^\frac{1}{2}-\sigma^{\frac{1}{2}}}_2\\
&\leq 2 \norm{a_2}_{\infty}  \norm{(\sigma^a)^\frac{1}{2}-\sigma^{\frac{1}{2}}}_2 \tr(\sigma) \\
&\leq 2 \norm{a_2}_{\infty} \norm{\sigma^a - \sigma}^\frac{1}{2}_1 \tr(\sigma) \, , \label{eq_esssDoone}
\end{align}
where second step follows by H\"older's inequality. The final step uses the Powers-St{\o}rmer inequality~\cite{Powers1970}.
Inequality~\eqref{eq_esssDoone} implies~\eqref{eq_dddaviiiid} and thus proves~\eqref{eq_s1_converge}.

We now prove \eqref{eq:petz-converge-almost-final}, by a reasoning that is
very similar to the above. 
H\"older's inequality implies that
\begin{align}
& \left \vert \langle  \cP^\dagger_{\sigma^a,\cN^{a,b}}(a_2), a_1 \rangle_{\cN^{a,b}(\sigma^a)} - \langle \cP^\dagger_{\sigma^a,\cN^{a,b}}(a_2), a_1  \rangle_{\cN(\sigma)}   \right \vert \nonumber \\
&\hspace{10mm}= \left \vert \tr\, a_1 \left( \cN^{a,b}(\sigma^a)^{\frac{1}{2}} \cP^\dagger_{\sigma^a,\cN^{a,b}}(a_2^\dagger)  \cN^{a,b}(\sigma^a)^{\frac{1}{2}}  - \cN(\sigma)^{\frac{1}{2}} \cP^\dagger_{\sigma^a,\cN^{a,b}}(a_2^\dagger)\cN(\sigma)^{\frac{1}{2}} \right)   \right \vert  \\
&\hspace{10mm} \leq  \norm{a_1}_{\infty} \norm{\cN^{a,b}(\sigma^a)^{\frac{1}{2}} \cP^\dagger_{\sigma^a,\cN^{a,b}}(a_2^\dagger)  \cN^{a,b}(\sigma^a)^{\frac{1}{2}}  -  \cN(\sigma)^{\frac{1}{2}} \cP^\dagger_{\sigma^a,\cN^{a,b}}(a_2^\dagger)\cN(\sigma)^{\frac{1}{2}}}_1 \, .
\end{align}
By the triangle inequality we find
\begin{align}
& \norm{\cN^{a,b}(\sigma^a)^{\frac{1}{2}} \cP^\dagger_{\sigma^a,\cN^{a,b}}(a_2^\dagger)  \cN^{a,b}(\sigma^a)^{\frac{1}{2}}  -  \cN(\sigma)^{\frac{1}{2}} \cP^\dagger_{\sigma^a,\cN^{a,b}}(a_2^\dagger)\cN(\sigma)^{\frac{1}{2}}}_1 \nonumber \\
&\hspace{15mm} \leq  \norm{\cN^{a,b}(\sigma^a)^{\frac{1}{2}} \cP^\dagger_{\sigma^a,\cN^{a,b}}(a_2^\dagger)  \cN^{a,b}(\sigma^a)^{\frac{1}{2}} - \cN(\sigma)^{\frac{1}{2}} \cP^\dagger_{\sigma^a,\cN^{a,b}}(a_2^\dagger)  \cN^{a,b}(\sigma^a)^{\frac{1}{2}}}_1 \nonumber \\
&\hspace{20mm}+\norm{\cN(\sigma)^{\frac{1}{2}} \cP^\dagger_{\sigma^a,\cN^{a,b}}(a_2^\dagger)  \cN^{a,b}(\sigma^a)^{\frac{1}{2}} - \cN(\sigma)^{\frac{1}{2}} \cP^\dagger_{\sigma^a,\cN^{a,b}}(a_2^\dagger)  \cN(\sigma)^{\frac{1}{2}}}_1 \\
&\hspace{15mm} \leq 2 \norm{\cP^\dagger_{\sigma^a,\cN^{a,b}}(a_2^\dagger)}_\infty  \norm{\cN^{a,b}(\sigma^a)^{\frac{1}{2}} - \cN(\sigma)^{\frac{1}{2}} }_2  \tr(\sigma)\\
&\hspace{15mm} \leq 2 \norm{a_2^\dagger}_\infty  \norm{\cN^{a,b}(\sigma^a) - \cN(\sigma) }^{\frac{1}{2}}_1 \tr(\sigma) \, ,
\end{align}
where the final step uses the Powers-St{\o}rmer inequality~\cite{Powers1970}. The triangle inequality implies that
\begin{align}
& \!\!\!\!\lim_{a \to \infty} \lim_{b \to \infty} \norm{\cN^{a,b}(\sigma^a) - \cN(\sigma) }_1 \notag \\
&\leq \lim_{a \to \infty} \lim_{b \to \infty}  \norm{\cN^{a,b}(\sigma^a) - \cN(\sigma^a) }_1 + \norm{ \cN(\sigma^a) - \cN(\sigma)}_1  \\
& \leq \lim_{a \to \infty} \norm{\sigma^a - \sigma}_1 \\
& = 0 \, ,
\end{align}
where the penultimate step uses Gr\"umm's theorem~\cite[Theorem~2.19]{simon_trace}. 

It remains to argue for the convergence $\lim_{a\rightarrow\infty}%
\lim_{b\rightarrow\infty}\mathcal{P}_{\sigma^{a},\mathcal{N}^{a,b}%
}=\mathcal{P}_{\sigma,\mathcal{N}}$. Consider that
\eqref{eq:petz-converge-almost-final-2} implies that%
\begin{multline}
\lim_{a\rightarrow\infty}\lim_{b\rightarrow\infty}\langle\phi^{\prime
}|\mathcal{N}(\sigma)^{\frac{1}{2}}\mathcal{P}_{\sigma^{a},\mathcal{N}^{a,b}%
}^{\dag}(|\phi\rangle\langle\psi|)\mathcal{N}(\sigma)^{\frac{1}{2}}%
|\psi^{\prime}\rangle
\\
=\langle\phi^{\prime}|\mathcal{N}(\sigma)^{\frac{1}{2}%
}\mathcal{P}_{\sigma,\mathcal{N}}^{\dag}(|\phi\rangle\langle\psi
|)\mathcal{N}(\sigma)^{\frac{1}{2}}|\psi^{\prime}\rangle \, ,
\end{multline}
for all $|\phi\rangle,|\psi\rangle\in A$ and $|\phi^{\prime}\rangle
,|\psi^{\prime}\rangle\in B$. This establishes convergence of $\mathcal{P}%
_{\sigma^{a},\mathcal{N}^{a,b}}^{\dag}$ to $\mathcal{P}_{\sigma,\mathcal{N}%
}^{\dag}$ for all $|\phi\rangle,|\psi\rangle\in A$ and $|\phi^{\prime}%
\rangle,|\psi^{\prime}\rangle\in\operatorname{supp}(\mathcal{N}(\sigma))$,
which in turn allows us to conclude convergence of $\mathcal{P}_{\sigma
^{a},\mathcal{N}^{a,b}}$ to $\mathcal{P}_{\sigma,\mathcal{N}}$.
\end{proof}

Before stating the following lemma, we introduce the shorthand
\begin{equation}
\cU_{\omega,t} : X \mapsto \omega^{\ci t} X \omega^{-\ci t} \ ,
\end{equation}
where $\omega$ is a positive semi-definite operator.

\begin{lemma} \label{lem_two}
For any $\sigma \in \Pos(A)$, any $\rho \in \D_{\sigma}(A)$, any $\cN \in \TPCP(A,B)$, and all $t\in \mathbb{R}$, the following limit holds
\begin{align}
\lim_{a \to \infty} \lim_{b \to \infty} F\bigl(\rho^a, (\cR^t_{\sigma^a,\cN^{a,b}} \circ \cN^{a,b})(\rho^a) \bigr) = F\bigl(\rho, (\cR^t_{\sigma,\cN} \circ \cN)(\rho) \bigr)\, .
\end{align}
\end{lemma}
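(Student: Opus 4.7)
The plan is to establish the convergence $(\cR^t_{\sigma^a,\cN^{a,b}} \circ \cN^{a,b})(\rho^a) \to (\cR^t_{\sigma,\cN} \circ \cN)(\rho)$ in a topology strong enough to yield convergence of the fidelity, and then invoke joint continuity of the fidelity.

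First I would unwrap the left-hand side as
\begin{align*}
(\cR^t_{\sigma^a,\cN^{a,b}} \circ \cN^{a,b})(\rho^a) = (\sigma^a)^{-\ci t}\, \cP_{\sigma^a,\cN^{a,b}}\bigl( \cN^{a,b}(\sigma^a)^{\ci t}\, \cN^{a,b}(\rho^a)\, \cN^{a,b}(\sigma^a)^{-\ci t}\bigr)\, (\sigma^a)^{\ci t}\, ,
\end{align*}
and show convergence of each building block. We already have $\rho^a \to \rho$ and $\sigma^a \to \sigma$ in trace norm, and a computation essentially identical to portions of the proof of Lemma~\ref{lem_PetzWellDefined} (using the Powers-Stormer inequality together with the $\Pi_B^b$-projection bound controlled by the estimate from \cite[Lemma~A.2]{SFR15}) gives $\cN^{a,b}(\rho^a) \to \cN(\rho)$ and $\cN^{a,b}(\sigma^a) \to \cN(\sigma)$ in trace norm. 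Applying the continuous functional calculus to the bounded continuous function $x \mapsto x^{\pm \ci t}$ on the spectra then converts trace-norm convergence of the positive operators $\sigma^a$ and $\cN^{a,b}(\sigma^a)$ into strong-operator convergence of the twirl unitaries $(\sigma^a)^{\pm \ci t}$ and $\cN^{a,b}(\sigma^a)^{\pm \ci t}$ on the relevant supports; conjugating a trace-norm-convergent sequence by a pair of strongly convergent unitaries still yields trace-norm convergence of the argument of the Petz map. Lemma~\ref{lem_PetzWellDefined} then provides convergence of the Petz map's output in the sense of \eqref{eq:channel-convergence-weak-op}, and a final sandwich by the outer twirls $(\sigma^a)^{\pm \ci t}$ preserves this convergence.

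Second, to lift this weak-operator convergence of the recovered states to trace-norm convergence (which the fidelity requires), I would exploit that each $(\cR^t_{\sigma^a,\cN^{a,b}} \circ \cN^{a,b})(\rho^a)$ is a subnormalised trace-class operator with trace bounded uniformly by one, while the traces themselves converge to $\tr(\rho)=1$ thanks to the functoriality property $(\cR^t_{\sigma,\cN} \circ \cN)(\sigma) = \sigma$ applied to the normalisation of the Petz map. By the standard fact that weak convergence of trace-class operators together with convergence of their trace norms upgrades to convergence in trace norm, the recovered states converge in trace norm. Joint continuity of the fidelity in trace norm, together with $\rho^a \to \rho$, then delivers the claimed limit.

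The main obstacle is precisely this last bridge between the weak-operator convergence afforded by Lemma~\ref{lem_PetzWellDefined} and the trace-norm convergence required by joint continuity of the fidelity; careful bookkeeping of traces combined with systematic use of $(\cR^t_{\sigma,\cN} \circ \cN)(\sigma) = \sigma$ should close this gap, as should alternatively a direct repetition of the Powers-Stormer/projection estimates from Lemma~\ref{lem_PetzWellDefined} applied now to the rotated Petz maps rather than to the plain Petz map.
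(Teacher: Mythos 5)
Your proposal follows essentially the same route as the paper: establish convergence of each building block of $\cR^t_{\sigma^a,\cN^{a,b}}\circ\cN^{a,b}$ to $\cR^t_{\sigma,\cN}\circ\cN$ (via the twirl unitaries and Lemma~\ref{lem_PetzWellDefined}), upgrade the resulting weak-operator convergence to trace-norm convergence, and finish by continuity of the fidelity. The paper handles the last step tersely by citing Holevo's book for the fact that, on density operators, weak convergence implies trace-norm convergence, whereas you spell out the trace-class bootstrapping; the two are morally identical.

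One small point of imprecision in your sketch: the trace convergence is not really a consequence of the identity $(\cR^t_{\sigma,\cN}\circ\cN)(\sigma)=\sigma$ ``applied to the normalisation of the Petz map''. The cleaner justification is that $\supp(\rho)\subseteq\supp(\sigma)$ is inherited by the truncations (one checks $\ker(\sigma^a)\subseteq\ker(\rho^a)$ directly from $\ker(\sigma)\subseteq\ker(\rho)$), so $\supp(\cN^{a,b}(\rho^a))\subseteq\supp(\cN^{a,b}(\sigma^a))$; on that subspace the twirls and the Petz map $\cP_{\sigma^a,\cN^{a,b}}$ are trace-preserving, whence $\tr\bigl((\cR^t_{\sigma^a,\cN^{a,b}}\circ\cN^{a,b})(\rho^a)\bigr)=\tr\bigl(\cN^{a,b}(\rho^a)\bigr)\to\tr(\cN(\rho))=1$. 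With that correction the argument closes, and it matches the paper's proof in both substance and structure.
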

\begin{proof}
We start with a reminder of a standard result: let $\omega$ and $\{\omega_n\}_{n \in \N}$ be positive semi-definite trace-class operators such that $\lim_{n \to \infty} \norm{\omega_n - \omega}_1=0$. Then, for any $t \in \R$ and any bounded linear operator $X$ the term $\omega_n^{\ci t} X \omega_n^{-\ci t}$ converges to $\omega^{\ci t} X \omega^{-Ê\ci t}$ for $n \to \infty$ in the weak operator topology. To see this we note that by Cauchy-Schwarz we have
\begin{align}
& \left| \langle \varphi, (\omega_n^{\ci t} X \omega_n^{-\ci t}- \omega^{\ci t} X \omega^{-\ci t}) \phi  \rangle \right|
\notag \\
& = \left| \langle \varphi, (\omega_n^{\ci t} X \omega_n^{-\ci t}- \omega_n^{\ci t} X \omega^{-\ci t}) + (\omega_n^{\ci t} X \omega^{-\ci t} - \omega^{\ci t} X \omega^{-\ci t})\phi  \rangle \right| \\
&\leq \left| \langle X \omega_n^{-\ci t} \varphi, (\omega_n^{-\ci t} - \omega^{-\ci t}) \phi  \rangle \right| + \left| \langle (\omega_n^{\ci t} - \omega^{\ci t}) \varphi, X \omega^{-\ci t} \phi  \rangle \right| \\
&\leq \norm{X} \norm{\varphi} \norm{(\omega_n^{-\ci t} - \omega^{-\ci t})\phi} + \norm{X} \norm{\phi} \norm{(\omega_n^{\ci t} - \omega^{\ci t})\varphi} \, .
\end{align}
We first note that for any fixed $k \in \N$ 
\begin{align}
\lim_{n \to \infty} \norm{(\Pi_k \omega_n^{-\ci t} \Pi_k- \Pi_k \omega^{-\ci t} \Pi_k) \phi} = 0 \, . \label{eq_DS1DS}
\end{align}
To see this we recall Duhamel's formula which shows that
\begin{align}
& \!\!\!\!(\Pi_k \omega_n^{-\ci t}\Pi_k  - \Pi_k \omega^{-\ci t}\Pi_k ) \notag \\
&=(\ee^{-\ci t \log \Pi_k \omega_n\Pi_k  }- \ee^{-\ci t \log \Pi_k \omega\Pi_k })  \\
&=  \ci \int_0^{t} \di s \,  \ee^{- \ci s \log \Pi_k \omega_n\Pi_k } (\log \Pi_k \omega\Pi_k  - \log \Pi_k \omega_n\Pi_k ) \ee^{- \ci (t-s) \log \Pi_k \omega \Pi_k } \, .
\end{align}
Theorem~X.3.7 from~\cite{bhatia_book} ensures that for any fixed $k \in \N$
\begin{multline}
\norm{\log \Pi_k \omega\Pi_k  - \log \Pi_k \omega_n\Pi_k} \\
\leq \norm{(\Pi_k \omega_n\Pi_k)^{-1}} \norm{\Pi_k \omega\Pi_k  - \Pi_k \omega_n\Pi_k} + O(\norm{\Pi_k \omega\Pi_k  - \Pi_k \omega_n\Pi_k}^2) \, ,
\end{multline}
where $\norm{( \Pi_k \omega_n\Pi_k)^{-1}} < \infty$ and $\lim_{n \to \infty}\norm{\Pi_k \omega\Pi_k  - \Pi_k \omega_n\Pi_k} =0$. This justifies~\eqref{eq_DS1DS}. By the triangle inequality we have
\begin{align}
& \!\!\!\! \norm{(\omega_n^{- \ci t} - \omega^{-\ci t}) \phi } \notag \\
&= \norm{(\omega_n^{- \ci t} - \omega^{-\ci t})  (\phi - \Pi_k \phi + \Pi_k \phi)}\\
& \leq \norm{(\Pi_k - \Pi_k + \id)(\omega_n^{-\ci t} \Pi_k- \omega^{-\ci t} \Pi_k) \phi} + \norm{(\omega_n^{- \ci t} - \omega^{-\ci t})  (\phi - \Pi_k \phi)}\\
&\leq  \norm{\Pi_k \omega_n^{-\ci t} \Pi_k- \Pi_k \omega^{-\ci t} \Pi_k) \phi} +\norm{(\id - \Pi_k)(\omega_n^{-\ci t} \Pi_k- \omega^{-\ci t} \Pi_k) \phi} \nonumber \\
&\hspace{20mm} + \norm{(\omega_n^{- \ci t} - \omega^{-\ci t})  (\phi - \Pi_k \phi)}\, ,
\end{align}
where for any fixed $n \in \N$
\begin{align}
\lim_{k \to \infty} \norm{(\id - \Pi_k)(\omega_n^{-\ci t} \Pi_k- \omega^{-\ci t} \Pi_k) \phi} = 0 
\end{align}
and
\begin{align}
\lim_{k \to \infty} \norm{(\omega_n^{- \ci t} - \omega^{-\ci t})  (\phi - \Pi_k \phi)} = 0 
\end{align}
We thus conclude that 
\begin{align}
\lim_{n \to \infty} \norm{(\omega_n^{- \ci t} - \omega^{-\ci t}) \phi }
&\leq \lim_{n \to \infty} \lim_{k \to \infty} \norm{(\Pi_k \omega_n^{-\ci t} \Pi_k- \Pi_k \omega^{-\ci t} \Pi_k) \phi}\\
&=\lim_{k \to \infty} \lim_{n \to \infty} \norm{(\Pi_k \omega_n^{-\ci t} \Pi_k- \Pi_k \omega^{-\ci t} \Pi_k) \phi}\\
&=0 \, ,
\end{align}
where in the penultimate step we swap the order of the limits~\cite[Theorem~7.11]{Rudin-76}.
The final step uses~\eqref{eq_DS1DS}.

We thus have the following convergences in the weak operator topology:
\begin{align}
\lim_{a \to \infty} \lim_{b \to \infty}
\cU_{\cN^{a,b}(\sigma^a),t}  = \cU_{\cN(\sigma),t}  \qquad \textnormal{and} \qquad
\lim_{a \to \infty} 
\cU_{\sigma^a,t}  = \cU_{\sigma,t} \, .
\end{align}
The serial concatenation of weakly converging channels converges to the serial concatenation of the limit, so that $\lim_{a \to \infty} \lim_{b \to \infty} \cR^t_{\sigma^a,\cN^{a,b}} \circ \cN^{a,b} =  \cR^t_{\sigma,\cN} \circ \cN
$, where we used the above and Lemma~\ref{lem_PetzWellDefined}. Then we can conclude that the fidelity converges because  it is continuous in its inputs (see, e.g.,~\cite[Lemma~B.9]{FR14}), and for our case considered here, convergence in the weak sense implies convergence in the trace norm \cite[Chapter~11]{holevo_book}. This proves the assertion.
\end{proof}

By invoking Theorem~\ref{thm_explicit} for finite-dimensional Hilbert spaces\footnote{Recall that by Remark~\ref{rmk_subnormalized}, Theorem~\ref{thm_explicit} remains valid for subnormalized states and a trace non-increasing completely positive map.} together with Lemmas~\ref{lem_one} and~\ref{lem_two},
we find for any $\rho \in \D_{\sigma}(A)$
\begin{align}
D(\rho \| \sigma) & = \lim_{a \to \infty} D(\rho^a \| \sigma^a) \\
&\geq \limsup_{a \to \infty} \limsup_{b \to \infty}   D\bigl(\cN^{a,b}(\rho^a) \| \cN^{a,b}(\sigma^a) \bigr) \notag \\
& \qquad - 2 \int_{\R}  \di t  \beta_0(t) 
\log F\bigl(\rho^a , (\cR^{\frac{t}{2}}_{\sigma^a,\cN^{a,b}} \circ \cN^{a,b})(\rho^a) \bigr) \\
&\geq D\bigl(\cN(\rho) \| \cN(\sigma) \bigr)  - 2 
\int_{\R} \ \di t \ \beta_0(t)  \ 
\log F\bigl(\rho, (\cR_{\sigma,\cN}^{\frac{t}{2}} \circ \cN)(\rho) \bigr) \ ,
\end{align}
where we also used Fatou's lemma that ensures 
\begin{align}
&\liminf_{a \to \infty} \liminf_{b \to \infty}  \int_{\R} \di t \, \beta_0(t) \Big(-2 \log F\bigl(\rho^a , (\cR^{\frac{t}{2}}_{\sigma^a,\cN^{a,b}} \circ \cN^{a,b})(\rho^a) \bigr)\Big) \nonumber \\
&\hspace{30mm}\geq \int_{\R} \di t  \, \beta_0(t) \liminf_{a \to \infty} \liminf_{b \to \infty}\Big( -2 \log F\bigl(\rho^a , (\cR^{\frac{t}{2}}_{\sigma^a,\cN^{a,b}} \circ \cN^{a,b})(\rho^a) \bigr) \Big) \, .
\end{align}
This concludes the proof of Theorem~\ref{thm_explicit}.



\section{Universal and explicit refinements of other entropy inequalities}
It is well known (see e.g.~\cite{ruskai02}) that the monotonicity of the relative entropy under trace-preserving completely positive maps is closely related to (i) strong subadditivity, (ii) concavity of the conditional entropy, and (iii) joint convexity of the relative entropy. Based on this relation, Remark~\ref{rmk_mainSimple} can be used to derive universal and explicit remainder terms for the statements (i)-(iii). We note that universal means that the recovery map that occurs in the remainder term does not depend on all possible parameters. Within this section we assume that $A$, $B$, and $C$ are finite-dimensional Hilbert spaces. For $n \in \N$, the $n$-simplex is denoted by $\Delta_{n}:=\{  x\in\R^{n} : \ x\geq 0, \ \sum_{i=1}^{n} x_{i}=1\}$. 

Applying Remark~\ref{rmk_mainSimple} for $\sigma=\id_A \otimes \rho_{BC} \in \Pos(A\otimes B \otimes C)$, $\rho=\rho_{ABC} \in \D_{\sigma}(A\otimes B \otimes C)$, and $\cN_{ABC \to AB}=\cI_{AB}\otimes\tr_C$ implies a strengthened version of the result that has been established in~\cite[Theorem~2.1, Corollary~2.4, and Remark~2.5]{SFR15}.
\begin{corollary}[Strong subadditivity] \label{cor_SSA}
For any density operator $\rho_{ABC} \in \D(A\otimes B \otimes C)$ we have
  \begin{align}
  I(A:C|B)_{\rho} \geq -2 \log F\bigl(\rho_{A B C}, \cR_{B \to BC}(\rho_{A B})\bigr) \ ,
  \end{align}
where $\cR_{B \to BC}(\cdot)= \int_{\R} \di t \beta_0(t) \cR^t_{\rho_{BC},\tr_C}(\cdot)$ as defined in Theorem~\ref{thm_explicit}.
\end{corollary}


 From Corollary~\ref{cor_SSA} we can deduce a universal remainder term for the concavity of the conditional entropy. We note that a similar remainder term has been conjectured in~\cite{BLW14}. Furthermore, a remainder term that however is neither universal nor explicit has been proven in~\cite[Corollary~12]{Wilde15}.

\begin{corollary}[Concavity of conditional entropy]\label{cor_concCondEnt} \
For any $\rho_{AB} \in \D(A\otimes B)$ we have
\begin{align}
H(A|B)_{\rho} - \sum_{x \in \cX} \nu(x) H(A|B)_{\rho^x}  
\geq - 2 \log \sum_{x\in \cX} \nu(x)  F\bigl(\rho_{AB}^x,\cR_{B \to AB}(\rho_B^x)\bigr)  \ ,
\end{align}
for all ensembles $\{\nu(x),\rho_{AB}^x \}_{x\in \cX}$ with $\nu \in \Delta_{\cX}$ and $\rho_{AB}^x \in \D(A \otimes B)$ such that $\rho_{AB}=\sum_{x \in \cX}\nu(x) \rho_{AB}^x$ where $\cR_{B \to AB}(\cdot)=\int_{\R} \di t \beta_0(t) \cR^{t}_{\rho_{AB},\tr_A}(\cdot)$ as defined in Theorem~\ref{thm_explicit}.
\end{corollary}
\begin{proof}
Let us consider the following classical-quantum state
\begin{align}
\phi_{XAB} = \sum_{x \in \cX} \nu(x) \proj{x}_X \otimes \rho_{AB}^x \ .
\end{align}
By Corollary~\ref{cor_SSA} and a simple fact ensuring that the fidelity for orthogonal input states decomposes (see Lemma~A.1 in~\cite{SFR15}), we find
\begin{align}
H(A|B)_{\rho} - \sum_{x \in \cX} \nu(x) H(A|B)_{\rho^x}  
&= H(A|B)_{\phi} - H(A|BX)_{\phi} \\
&= I(X:A|B)_{\phi}  \\
& \geq -2 \log F\bigl(\phi_{XAB} , \cR_{B \to AB}(\phi_{XB}) \bigr) \\
&= -2 \log \sum_{x \in \cX} \nu(x) F\bigl(\rho_{AB}^x , \cR_{B \to AB}(\rho_B^x) \bigr) \ .
\end{align}
We note that by Corollary~\ref{cor_SSA} the recovery map $\cR_{B \to AB}$ is explicit and universal in the sense that it only depends on $\rho_{AB}$. 
\end{proof}

Remark~\ref{rmk_mainSimple} implies a universal remainder term for the joint convexity of the relative entropy. We note that a similar remainder term has been conjectured in~\cite{SBW14}. Corollary~13 in~\cite{Wilde15} proves such a remainder term that however is neither universal nor explicit.
\begin{corollary}[Joint convexity of relative entropy] \label{cor_jointConvRelEnt} \
For any $\nu \in \Delta_{\cX}$, $\{ \sigma^x \}_{x \in \cX}$ with $\sigma^x \in \Pos(A)$ for all $x \in \cX$, $\sigma_{XA}=\sum_{x \in \cX} \nu(x) {\proj{x}_X \otimes \sigma^x}$, any $\rho_x \in \D_{\sigma^x}(A)$ and $\rho=\sum_{x \in \cX} \nu(x) \rho^x$ we have
\begin{align}
\sum_{x \in \cX} \nu(x) D(\rho^x \dd \sigma^x) - D(\rho \dd \sigma) 
\geq -2 \log \sum_{x\in \cS} \nu(x) F\bigl(\rho^x, \cR_{\sigma_{XA},\tr_X}(\rho) \bigr)  \ ,
\end{align}
where $\cR_{\sigma_{XA},\tr_X}(\cdot) = \int_{\R} \di t \beta_0(t) \cR^t_{\sigma_{XA},\tr_X}(\cdot)$ as defined in Theorem~\ref{thm_explicit}.
\end{corollary}
\begin{proof}
Let us consider the states
\begin{align}
 \rho_{XA}:= \sum_{x \in \cX} \nu(x) \proj{x}_X \otimes \rho^x \quad \textnormal{and} \quad \rho = \rho_A = \sum_{x \in \cX} \nu(x) \rho^x  \ .
\end{align}
Since the relative entropy decomposes for orthogonal input states (see Lemma~B.2 in~\cite{SFR15}), Remark~\ref{rmk_mainSimple} implies that the recovery map $\cR_{\sigma_{XA},\tr_X}(\cdot) = \int_{\R} \di t \beta_0(t) \cR^t_{\sigma_{XA},\tr_X}(\cdot)$ satisfies
\begin{align}
\sum_{x \in \cX} \nu(x) D(\rho^x \dd \sigma^x) - D(\rho \dd \sigma)  
&= D(\rho_{XA} \dd \sigma_{XA}) - D(\rho_A \dd \sigma_A) \\
 &\geq -2 \log F\bigl(\rho_{XA} , \cR_{\sigma_{XA},\tr_X}(\rho) \bigr) \\
 &=-2 \log \sum_{x \in \cX} \nu(x) F\bigl(\rho^x , \cR_{\sigma_{XA},\tr_X}(\rho) \bigr) \ ,
\end{align}
where the final step uses that the fidelity for orthogonal input states decomposes (see Lemma~A.1 in~\cite{SFR15}).
\end{proof}


\section{Approximate quantum error correction}
Theorem~\ref{thm_explicit} allows for establishing an alternate, information-theoretic
set of conditions for approximate quantum error correction. Recall that in
quantum error correction the main goal is to protect a set of states from the
action of a noisy quantum channel, by encoding this set into a subspace of a
given Hilbert space. The quantum error correction conditions given in the seminal works \cite{BDSW96,KL97} are necessary and
sufficient for perfect quantum error correction. Although
these works were essential for gaining an understanding of quantum error
correction, the conditions given represent a strong idealization, since we can never hope in
practice to have perfect quantum error correction. So this realization
motivated some follow-up works on approximate quantum error correction (see
\cite{BK02,Mandayam2012} and references therein), in which one seeks to
determine conditions under which approximate recovery from the action of a
given noisy channel is possible. We mention that some of these works made
essential use of the Petz recovery map.

In \cite{HMPB11}, information-theoretic conditions for perfect error
correction were given in terms of the quantum relative entropy. We recall
their result:\ Let $A$ be a Hilbert space, let $C$ be a subspace of $A$
(called the \textquotedblleft codespace\textquotedblright), and let $\Pi$
denote the projection onto $C$. Then%
\begin{equation}
\{\forall\, \rho\in \D(C)\ \textnormal{ : } \ D(\rho\Vert\Pi)=D(\mathcal{N}(\rho)\Vert\mathcal{N}%
(\Pi)) \}
\iff \{\forall \, \rho\in \D(C) \ \textnormal{ : } \ \rho=(\mathcal{P}_{\Pi,\mathcal{N}%
}\circ\mathcal{N})(\rho) \} \ ,\label{eq:rel-ent-cond}%
\end{equation}
where $\cP_{\Pi,\cN}(\cdot)$ is the Petz recovery map defined in~\eqref{eq_Petzmap}.
\textit{So perfect error correction is possible if and only if for all }%
$\rho\in \D(C)$ \textit{the pairwise distinguishability of }$\rho$\textit{ with
}$\Pi$\textit{ does not decrease under the action of the noisy channel
}$\mathcal{N}$.

Given the statement in \eqref{eq:rel-ent-cond} and the prior work on
approximate quantum error correction, it is natural to wonder whether a robust
version of \eqref{eq:rel-ent-cond} exists. Indeed, Theorem~\ref{thm_explicit} (or rather Remark~\ref{rmk_mainSimple}) implies
such a statement, establishing necessary and sufficient information-theoretic
conditions for approximate quantum error correction. Loosely speaking, we can
now say that \textit{approximate error correction is possible if any only if
the pairwise distinguishability of }$\rho$\textit{ with }$\Pi$\textit{ does
not decrease by too much under the action of the noisy channel }$\mathcal{N}$ \emph{for all $\rho \in \D(C)$}.
The corollary below is a consequence of Remark~\ref{rmk_mainSimple} and some other
known facts.

\begin{corollary} \label{cor_QEC}
Let $\varepsilon\in\left[  0,1\right]$, $A$ and $B$ be a finite-dimensional Hilbert spaces, $C$ be a subspace of $A$ (called the
\textquotedblleft codespace\textquotedblright), $\cN \in \TPCP(A,B)$ a quantum channel, and let $\Pi$ denote the
projection onto $C$. If for all $\rho\in \D(C)$ we have 
\begin{equation}
 D(\rho\Vert\Pi)-D\bigl(\mathcal{N}(\rho)\Vert\mathcal{N}(\Pi)\bigr)\leq\varepsilon \ ,
\end{equation}
then it is possible to approximately recover every state $\rho \in \D(C)$, in the sense that we have for all $\rho\in \D(C)$
\begin{equation}
 F\bigl(\rho,(\mathcal{R}_{\Pi,\mathcal{N}}\circ\mathcal{N})(\rho)\bigr)\geq1-\frac{1}{2} \varepsilon\ .\label{eq:fid-approx-QEC}%
\end{equation}
Conversely, if \eqref{eq:fid-approx-QEC} holds for $\varepsilon\in\left[  0,1 \right]  $, then we have for all $\rho\in \D(C)$
\begin{equation}
 D(\rho\Vert\Pi)-D(\mathcal{N}(\rho)\Vert\mathcal{N}(\Pi
))\leq\sqrt{\varepsilon}\log(\dim C)+h_{2}(\sqrt{\varepsilon}) \ ,
\end{equation}
where $h_{2}(p):=-p\log p-(1-p)\log(1-p)$ is the binary entropy, with the
property that $\lim_{p\searrow0}h_{2}(p)=0$.
\end{corollary}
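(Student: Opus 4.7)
I would prove the two implications separately, with the forward direction being an immediate invocation of Theorem~\ref{thm_explicit} and the converse being a short continuity argument. The key structural simplification throughout is that for any density operator $\tau$ with $\supp(\tau) \subseteq C = \supp(\Pi)$, one has $\log \Pi = 0$ on the support of $\tau$, so that relative entropies against $\Pi$ collapse to negative entropies:
\begin{equation*}
D(\tau \dd \Pi) \;=\; \tr(\tau \log \tau) - \tr(\tau \log \Pi) \;=\; -H(\tau)\ .
\end{equation*}

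For the forward direction, I apply Theorem~\ref{thm_explicit} with $\sigma = \Pi$ and the given $\rho \in \D(C)$, using $\supp(\rho) \subseteq \supp(\Pi)$, to obtain
\begin{equation*}
\varepsilon \;\geq\; D(\rho \dd \Pi) - D\bigl(\cN(\rho) \dd \cN(\Pi)\bigr) \;\geq\; -2\log F\bigl(\rho, (\cR_{\Pi,\cN} \circ \cN)(\rho)\bigr)\ .
\end{equation*}
Rearranging gives $F(\rho,(\cR_{\Pi,\cN}\circ\cN)(\rho)) \geq 2^{-\varepsilon/2}$, and the elementary estimate $2^{-x} \geq 1-x$ for $x \in [0,1]$ (valid by convexity of $2^{-x}$, which equals $1$ at $0$ and $1/2$ at $1$) applied with $x = \varepsilon/2$ yields~\eqref{eq:fid-approx-QEC}.

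For the converse, I first observe that the recovery map $\cR_{\Pi,\cN}$ from Theorem~\ref{thm_explicit} is a convex combination of rotated Petz maps $\cR^{t/2}_{\Pi,\cN}$, each of which perfectly reconstructs $\Pi$ (so $(\cR_{\Pi,\cN}\circ\cN)(\Pi) = \Pi$) and, by the sandwich form in~\eqref{eq_Petzmap} with $\Pi^{1/2}=\Pi$, sends any input to an operator supported on $C$. Hence $\tau := (\cR_{\Pi,\cN}\circ\cN)(\rho)$ is a density operator on $C$. Monotonicity of the relative entropy under the CP trace non-increasing map $\cR_{\Pi,\cN}$ then gives $D(\cN(\rho)\dd\cN(\Pi)) \geq D(\tau \dd \Pi)$, and invoking the collapse above for both $\rho$ and $\tau$ yields
\begin{equation*}
D(\rho \dd \Pi) - D\bigl(\cN(\rho) \dd \cN(\Pi)\bigr) \;\leq\; H(\tau) - H(\rho)\ .
\end{equation*}

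It remains to control this entropy gap by the fidelity hypothesis. From $F(\rho,\tau) \geq 1 - \varepsilon/2$ and the Fuchs--van de Graaf inequality $\tfrac{1}{2}\|\rho-\tau\|_1 \leq \sqrt{1-F(\rho,\tau)^2}$, one gets $1 - F^2 \leq \varepsilon - \varepsilon^2/4 \leq \varepsilon$, hence $\tfrac{1}{2}\|\rho-\tau\|_1 \leq \sqrt{\varepsilon}$. Because both $\rho$ and $\tau$ are density operators on the $(\dim C)$-dimensional codespace, the Fannes--Audenaert continuity bound for the von Neumann entropy delivers
\begin{equation*}
H(\tau) - H(\rho) \;\leq\; \sqrt{\varepsilon}\log(\dim C - 1) + h_2(\sqrt{\varepsilon}) \;\leq\; \sqrt{\varepsilon}\log(\dim C) + h_2(\sqrt{\varepsilon})\ ,
\end{equation*}
which is the claimed bound. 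I do not foresee any serious obstacle; the only point that warrants care is confirming that $\tau$ is indeed supported on $C$, so that the Fannes--Audenaert estimate is applied with the codespace dimension rather than with $\dim B$ (where $\cN(\rho)$ lives) or $\dim A$.
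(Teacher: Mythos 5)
Your proof is correct and follows essentially the same route as the paper's: apply Theorem~\ref{thm_explicit} with $\sigma=\Pi$ and the elementary bound $-\log x\geq 1-x$ for the forward direction, and for the converse combine monotonicity of the relative entropy under $\cR_{\Pi,\cN}$, the fact that $(\cR_{\Pi,\cN}\circ\cN)(\Pi)=\Pi$, the collapse $D(\tau\,\dd\,\Pi)=-H(\tau)$ on $\supp(\Pi)$, Fuchs--van de Graaf, and Fannes--Audenaert. One small correction: your parenthetical justification of $2^{-x}\geq 1-x$ is not valid as stated --- convexity of $2^{-x}$ together with the endpoint values only gives the \emph{upper} bound $2^{-x}\leq 1-\tfrac{x}{2}$ (the chord); the lower bound follows instead from the tangent at $0$, i.e.\ $2^{-x}\geq 1-(\ln 2)x\geq 1-x$, or one can simply invoke $-\log x\geq 1-x$ directly on $-2\log F\leq\varepsilon$ as the paper does.
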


\begin{proof}
The first statement is a direct consequence of Remark~\ref{rmk_mainSimple}, found by
setting $\rho=\rho$, $\sigma=\Pi$, and $\mathcal{N}=\mathcal{N}$ and then applying the inequality $-\log(x)\geq1-x$,
which holds for $x\in\left[  0,1\right]  $.

To prove the second statement, suppose that \eqref{eq:fid-approx-QEC} holds.
By the Fuchs-van-de-Graaf inequality~\cite{fuchs99}, \eqref{eq:fid-approx-QEC} implies that for all $\rho \in \D(C)$ we have
\begin{equation}
\frac{1}{2}\left\Vert \rho-(\mathcal{R}_{\Pi,\mathcal{N}%
}\circ\mathcal{N})(\rho)\right\Vert _{1}\leq\sqrt{\varepsilon} \ .
\end{equation}
Then consider the following chain of inequalities, which holds for all $\rho\in \D(C)$:
\begin{align}
& \!\!\!\! D(\rho\Vert\Pi)-D(\mathcal{N}(\rho)\Vert\mathcal{N}(\Pi)) \notag \\
& \leq D(\rho\Vert\Pi)-D\bigl((\mathcal{R}_{\Pi,\mathcal{N}}\circ\mathcal{N})(\rho)\Vert(\mathcal{R}_{\Pi,\mathcal{N}}\circ\mathcal{N})(\Pi)\bigr)  \\
& =D(\rho\Vert\Pi)-D\bigl((\mathcal{R}_{\Pi,\mathcal{N}}\circ\mathcal{N})(\rho)\Vert\Pi\bigr) \\
& =-H(\rho)+H((\mathcal{R}_{\Pi,\mathcal{N}}\circ\mathcal{N})(\rho
))+\tr\bigl(\left[  (\mathcal{R}_{\Pi,\mathcal{N}}\circ
\mathcal{N})(\rho)-\rho\right]  \log\Pi\bigr) \\
& \leq\sqrt{\varepsilon}\log\dim(C)+h_{2}(\sqrt{\varepsilon}).
\end{align}
The first inequality is a consequence of monotonicity of quantum relative
entropy with respect to the recovery map $\mathcal{R}_{\Pi,\mathcal{N}}$. The
first equality follows because $(\mathcal{R}_{\Pi,\mathcal{N}}\circ
\mathcal{N})(\Pi))=\Pi$. The second equality is a rewriting, and the last
inequality follows from the Fannes-Audenaert inequality~\cite{audenaert07} (see also~\cite[Lemma~1]{winter15}) and the facts that $\operatorname{supp}(\rho
)\subseteq\operatorname{supp}(\Pi)$, $\operatorname{supp}((\mathcal{R}%
_{\Pi,\mathcal{N}}\circ\mathcal{N})(\rho))\subseteq\operatorname{supp}(\Pi)$,
and $\log\Pi=0$ on the support of $\Pi$.
\end{proof}

\section{Conclusion}
In this work, we showed that for any non-negative operator $\sigma$ and for any channel $\cN$ there exists an \emph{explicit} and \emph{universal} recovery map $\cR_{\sigma,\cN}$ such that
\begin{align} \label{eq_res_conc}
D(\rho \| \sigma) - D\bigl(\cN(\rho) \| \cN(\sigma) \bigr) \geq - 2 \log F\bigl(\rho,(\cR_{\sigma,\cN}\circ\cN)(\rho)\bigr)
\end{align}
for all density operators $\rho$ such that $\supp(\rho) \subseteq \supp(\sigma)$. A consequence of its universality is that the recovery map $\cR_{\sigma,\cN}$ satisfies $(\cR_{\sigma,\cN}\circ\cN)(\sigma)=\sigma$. The present work thus constitutes an improvement over~\cite{Wilde15} since the recovery map is constructed explicitly and is independent of $\rho$.
Knowing the explicit structure of the recovery map that satisfies \eqref{eq_res_conc} can be helpful in various scenarios. For example if this is to be implemented by experimentalists or in an actual quantum computer it is helpful to have an explicit recovery map. 
 We showed that the inequality~\eqref{eq_res_conc} implies universal and explicit remainder terms for other entropy inequalities such as strong subadditivity, concavity of the entropy, and the joint convexity of the relative entropy. Furthermore, it can be useful in the context of quantum sufficient statistics and approximate quantum error correction. 



\section*{Appendix}

\appendix

\section{Proof of Lemma \ref{thm:op-hirschman}}

\label{sec:op-hirschman-proof}

We begin by recalling Hirschman's strengthening \cite{H52} of the Hadamard
three-line theorem (see, e.g., \cite[Lemma~1.3.8]{G08}):

\begin{lemma}
\label{thm:hirschman}Let $S:=\left\{  z\in\mathbb{C}:0\leq\operatorname{Re}\left\{  z\right\} \leq1\right\}$ and let $g(z)$ be holomorphic on the interior of $S$ and continuous on the
boundary, such that%
\begin{equation}
\sup_{z\in S}\exp\left\{  -a\left\vert \operatorname{Im}z\right\vert \right\}
\log\left\vert g(z)\right\vert \leq A<\infty,
\end{equation}
for some fixed $A$ and $a<\pi$. Then for $\theta\in(0,1)$, the following bound
holds%
\begin{equation}
\log\left\vert g(\theta)\right\vert \leq\int_{\R}\di t\ \Bigl(
\alpha_{\theta}(t)\log\bigl(  \left\vert g(\ci t)\right\vert ^{1-\theta}\bigr)
+\beta_{\theta}(t)\log\bigl(  \left\vert g(1+\ci t)\right\vert ^{\theta}\bigr)
\Bigr)  ,
\end{equation}
where $\alpha_{\theta}(t)$ and $\beta_{\theta}(t)$ are defined in \eqref{eq:pt}.
\end{lemma}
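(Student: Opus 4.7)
The key observation is that $\log|g(z)|$ is subharmonic on the open strip $S$ (with the convention $\log 0 = -\infty$ at zeros of $g$), and the claimed inequality is exactly the Poisson integral upper bound for subharmonic functions on the strip, evaluated at $z = \theta$. The work therefore decomposes into two pieces: (i) write down the explicit Poisson kernels for the strip at the interior point $\theta$ and match them with the weights $\alpha_\theta, \beta_\theta$ from~\eqref{eq:pt}; and (ii) verify that the growth condition $\sup_{z \in S} \exp(-a|\operatorname{Im}(z)|)\log|g(z)| \leq A$ with $a < \pi$ is strong enough to turn subharmonicity into the desired Poisson bound via a Phragm\'en--Lindel\"of argument.

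For (i), I would conformally map the strip $S$ onto the upper half-plane using, for example, $\zeta = -\ci \cot(\pi z/2)$, which sends $\theta$ to some interior point $\zeta_0 \in \mathbb{H}$ and maps the boundary lines $\operatorname{Re}(z) = 0$ and $\operatorname{Re}(z) = 1$ onto two halves of $\mathbb{R}$. Pulling back the standard half-plane Poisson kernel through this map and simplifying with basic trigonometric identities yields the two boundary kernels
\[
P_0(\theta, t) = \frac{\sin(\pi\theta)}{2\bigl(\cosh(\pi t) - \cos(\pi\theta)\bigr)}, \qquad P_1(\theta, t) = \frac{\sin(\pi\theta)}{2\bigl(\cosh(\pi t) + \cos(\pi\theta)\bigr)},
\]
which, comparing with~\eqref{eq:pt}, satisfy $P_0(\theta, t) = (1-\theta)\,\alpha_\theta(t)$ and $P_1(\theta, t) = \theta\,\beta_\theta(t)$. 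One also checks that $\int_{\mathbb{R}}\bigl(P_0(\theta,t) + P_1(\theta,t)\bigr)\,\di t = 1$, confirming that these kernels define genuine probability measures on the two boundary lines.

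For (ii), let $h(z)$ denote the Poisson extension into $S$ of the boundary data $\log|g(\ci t)|$ and $\log|g(1+\ci t)|$; the hypothesis $a < \pi$, combined with the $e^{-\pi|t|}$ decay of the Poisson kernels above, makes these integrals converge absolutely, so $h$ is harmonic on the interior of $S$ with the correct boundary values almost everywhere. Set $v(z) := \log|g(z)| - h(z)$. Then $v$ is subharmonic on $S$, has boundary $\limsup$ at most $0$ on both lines, and inherits a growth bound of the form $\sup_{z \in S} \exp(-a'|\operatorname{Im}(z)|)\, v(z) < \infty$ for some $a' < \pi$ (where one must separately control the growth of $h$ from its explicit Poisson formula). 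The Phragm\'en--Lindel\"of principle on the strip (whose critical exponent is $\pi$) then forces $v \leq 0$ on all of $S$, i.e.\ $\log|g(\theta)| \leq h(\theta)$; substituting $P_0 = (1-\theta)\alpha_\theta$ and $P_1 = \theta\beta_\theta$ into $h(\theta)$ and pulling the factors $1-\theta$ and $\theta$ inside the logarithm produces exactly the claimed inequality.

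\emph{Main obstacle.} The delicate step is the Phragm\'en--Lindel\"of argument: one has to verify that the Poisson extension $h$ does not introduce uncontrolled growth (so that $v = \log|g| - h$ still satisfies an $e^{a'|\operatorname{Im}(z)|}$ bound with $a' < \pi$), and one must use that the inequality $a < \pi$ is strict, since at $a = \pi$ the maximum principle on the strip fails (witnessed by functions like $\exp(\exp(\pm \ci \pi z))$). A minor bookkeeping issue is that at zeros of $g$ one has $\log|g| = -\infty$, which is harmless for an upper bound; the only place care is needed is in bounding the positive part of $\log|g|$ on the boundary, which is precisely what the growth hypothesis gives.
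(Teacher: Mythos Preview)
The paper does not give its own proof of this lemma: it is simply quoted as Hirschman's classical result \cite{H52} (with a pointer to \cite[Lemma~1.3.8]{G08}) and then used as a black box to derive the operator-valued Lemma~\ref{thm:op-hirschman}. So there is nothing in the paper to compare your argument against.

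That said, your outline is the standard route to Hirschman's inequality and is essentially what one finds in the cited references: recognize $\log|g|$ as subharmonic, identify the Poisson kernels for the strip (your formulas for $P_0$ and $P_1$ are correct, and the identifications $P_0=(1-\theta)\alpha_\theta$, $P_1=\theta\beta_\theta$ match~\eqref{eq:pt}), and use the growth hypothesis with $a<\pi$ together with the $e^{-\pi|t|}$ decay of the kernels to justify a Phragm\'en--Lindel\"of step. The obstacle you flag---controlling the growth of the Poisson extension $h$ so that $v=\log|g|-h$ still obeys a sub-$\pi$ exponential bound---is exactly the place where the strict inequality $a<\pi$ is used, and your plan for handling it is the right one. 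No genuine gap.
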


\noindent We can now prove Lemma~\ref{thm:op-hirschman}.

\begin{proof}[Proof of Lemma~\ref{thm:op-hirschman}]
The proof of this theorem is well known, but we provide it for completeness. For fixed $\theta\in(0,1)$, let $q_{\theta}$ be the H\"{o}lder
conjugate of $p_{\theta}$, defined by%
\begin{equation}
\frac{1}{p_{\theta}}+\frac{1}{q_{\theta}}=1 \ .
\end{equation}
Similarly, let $q_{0}$ and $q_{1}$ be H\"{o}lder conjugates of $p_{0}$ and
$p_{1}$, respectively. We can find an operator $X$ such that%
\begin{align}
\left\Vert X\right\Vert _{q_{\theta}}   =1 \quad \textnormal{and} \quad
\operatorname{tr}(  XG(\theta))     =\left\Vert G(\theta)\right\Vert
_{p_{\theta}} \ .
\end{align}
We can write the singular value decomposition for $X$ in the form
$X=UD^{1/q_{\theta}}V$ (implying tr$(D)=1$). For $z\in S$, define%
\begin{equation}
X(z):= UD^{\frac{1-z}{q_{0}}+\frac{z}{q_{1}}}V \ .
\end{equation}
As a consequence, $X(z)$ is holomorphic on the interior of $S$ and continuous on the boundary. Also, observe that $X(\theta)=X$. Then the
following function satisfies the requirements needed to apply
Lemma~\ref{thm:hirschman}:%
\begin{equation}
g(z):=\operatorname{tr}\bigl(  X(z)G(z) \bigr) \   .
\end{equation}
Indeed, we have that%
\begin{equation}
\log\  \left\Vert G(\theta)\right\Vert _{p_{\theta}}
=\log\left\vert g(\theta)\right\vert \leq\int_{\R}\di t\ \left(
\alpha_{\theta}(t)\log\bigl(  \left\vert g(\ci t)\right\vert ^{1-\theta}\bigr)
+\beta_{\theta}(t)\log\bigl(  \left\vert g(1+\ci t)\right\vert ^{\theta}\bigr)
\right)  .\label{eq:app-hirschman}%
\end{equation}
Now, from applying H\"{o}lder's inequality and the facts that $\left\Vert
X(it)\right\Vert _{q_{0}}=1=\left\Vert X(1+\ci t)\right\Vert _{q_{1}}$, we find
that%
\begin{align}
\left\vert g(\ci t)\right\vert  & =\left\vert \operatorname{tr}\bigl(
X(\ci t)G(\ci t)\bigr)  \right\vert \leq\left\Vert X(\ci t)\right\Vert _{q_{0}%
}\left\Vert G(\ci t)\right\Vert _{p_{0}}=\left\Vert G(\ci t)\right\Vert _{p_{0}}
\end{align}
and
\begin{align}
\left\vert g(1+\ci t)\right\vert  & =\left\vert \operatorname{tr}\bigr(
X(1+\ci t)G(1+\ci t)\bigr)  \right\vert \leq\left\Vert X(1+\ci t)\right\Vert _{q_{1}%
}\left\Vert G(1+\ci t)\right\Vert _{p_{1}}=\left\Vert G(1+\ci t)\right\Vert _{p_{1}%
}.
\end{align}
Bounding \eqref{eq:app-hirschman} from above using these inequalities then
gives \eqref{eq:oper-hirschman}.
\end{proof}


\section*{Acknowledgments}

We thank Omar Fawzi, Rupert Frank, J\"urg Fr\"ohlich, Elliott Lieb, Volkher Scholz and Marco Tomamichel for helpful discussions. We further thank the anonymous referee for constructive feedback.
MJ's work was supported by NSF DMS 1501103 and NSF-DMS 1201886. 
RR and DS acknowledge support by the European Research Council (ERC) via grant No.~258932, by the Swiss National Science Foundation (SNSF) via the National Centre of Competence in Research ``QSIT'', and by the European Commission via the project ``RAQUEL''. 
MMW acknowledges support from the NSF under Award Nos.~CCF-1350397 \& 1714215 and the DARPA Quiness Program through US Army Research Office award W31P4Q-12-1-0019. He is also grateful to co-authors MJ, RR, and AW for hospitality and support during research visits in summer 2015.
AWÕs work was supported by the EU (STREP ÒRAQUELÓ), the ERC (AdG ÒIRQUATÓ), the Spanish MINECO (grant FIS2013-40627-P) with the support of FEDER funds, as well as by the Generalitat de Catalunya CIRIT, project 2014-SGR-966.

\vspace{-3mm}


  \bibliographystyle{abbrv}
  
  \bibliography{bibliofile}

\end{document}